\documentclass[preliminary,creativecommons]{eptcs}

\usepackage{breakurl}
\usepackage[english]{babel}
\usepackage{hyperref,tikz,amsmath,amsthm,amsfonts}

\newtheorem{theorem}{Theorem}
\newtheorem{proposition}{Proposition}
\newtheorem{lemma}{Lemma}
\theoremstyle{remark}
\newtheorem{example}{Example}

\title{On the Delone property of ($-\beta$)-integers}

\author{Wolfgang Steiner
\institute{LIAFA, CNRS, Universit\'e Paris Diderot -- Paris 7, \\ Case 7014, 75205 Paris Cedex 13, France}
\email{steiner@liafa.jussieu.fr}}

\begin{document}
\maketitle

\begin{abstract}
The $(-\beta)$-integers are natural generalisations of the $\beta$-integers, and thus of the integers, for negative real bases. 
They can be described by infinite words which are fixed points of anti-morphisms.
We show that they are not necessarily uniformly discrete and relatively dense in the real numbers.
\end{abstract}

\section{Introduction}
We study the set of \emph{$(-\beta)$-integers} for a real number $\beta > 1$.
This set is defined by
\[
\mathbb{Z}_{-\beta} = \bigcup_{n\ge0} (-\beta)^n\, T_{-\beta}^{-n}(0)\,,
\]
where $T_{-\beta}$ is the \emph{$(-\beta)$-transformation}, defined by Ito and Sadahiro \cite{Ito-Sadahiro09} as
\[
T_{-\beta}:\ \big[\tfrac{-\beta}{\beta+1}, \tfrac{1}{\beta+1}\big), \quad x \mapsto -\beta x - \big\lfloor \tfrac{\beta}{\beta+1} - \beta x \big\rfloor\,.
\]
Equivalently, a $(-\beta)$-integer is a real number of the form
\[
\sum_{k=0}^{n-1} d_k\, (-\beta)^k\,, \quad \mbox{with} \quad \frac{-\beta}{\beta+1} \le \sum_{k=0}^{m-1} d_k\, (-\beta)^{k-m} < \frac{1}{\beta+1} \quad \mbox{for all}\ 1 \le m \le n,
\]
where $d_0, d_1, \ldots, d_{k-1}$ are integers.
Examples of $(-\beta)$-transformations are depicted in Figure~\ref{f:Tmbeta}.

Recall that the set of \emph{$\beta$-integers} is defined by
\[
\mathbb{Z}_\beta = \mathbb{Z}_\beta^+ \cup (-\mathbb{Z}_\beta^+) \quad \mbox{with} \quad \mathbb{Z}_\beta^+ = \bigcup_{n\ge0} \beta^n\, T_\beta^{-n}(0)\,,
\]
where $T_\beta$ is the \emph{$\beta$-transformation},
\[
T_\beta:\,[0,1) \to [0,1)\,,\quad x \mapsto \beta x - \lfloor \beta x \rfloor\,.
\]
These sets were introduced in the domain of quasicrystallography, see e.g.~\cite{Burdik-Frougny-Gazeau-Krejcar98}.

It is not difficult to see that $\mathbb{Z}_{-\beta} = \mathbb{Z}$ when $\beta \in \mathbb{Z}$, and that $\mathbb{Z}_{-\beta} = \{0\}$ when $\beta < \frac{1+\sqrt5}{2}$.
For $\beta \ge \frac{1+\sqrt5}{2}$, Ambro\v{z} et al.~\cite{Ambroz-Dombek-Masakova-Pelantova} showed that $\mathbb{Z}_{-\beta}$ can be described by the fixed point of an anti-morphism on a possibly infinite alphabet.
They also calculated explicitely the set of distances between consecutive $(-\beta)$-integers when $T_{-\beta}^n\big(\frac{-\beta}{\beta+1}\big) \le 0$ and $T_{-\beta}^{2n-1}\big(\frac{-\beta}{\beta+1}\big) \ge \frac{1-\lfloor\beta\rfloor}{\beta}$ for all $n \ge 1$.
It seems to be difficult to extend their methods to the general case.
For the case when $\beta$ is an \emph{Yrrap number}, i.e., when $\big\{T_{-\beta}^n\big(\frac{-\beta}{\beta+1}\big)  \mid n \ge 0\big\}$ is a finite set, a~different approach can be found in~\cite{Steiner}.
The approach in Section~\ref{sec:fixed-point-an} resembles that in~\cite{Steiner}, but it is simpler and works for general~$\beta$.
In Section~\ref{sec:delone-property}, we discuss the Delone property for sets~$\mathbb{Z}_{-\beta}$. 

\section{Fixed point of an anti-morphism} \label{sec:fixed-point-an}
By Lemma~\ref{l:Ziota}, we can consider the set of $(-\beta)$-integers, $\beta > 1$, as a special instance of the preimage of a point in $\big[\tfrac{-\beta}{\beta+1}, \tfrac{1}{\beta+1}\big)$ of the map
\[
\iota_\beta:\, \mathbb{R} \to \big[\tfrac{-\beta}{\beta+1}, \tfrac{1}{\beta+1}\big)\,, \quad x \mapsto T_{-\beta}^n\big((-\beta)^{-n} x\big)\,, \ \mbox{with}\ n \ge 0\ \mbox{such that}\ (-\beta)^{-n} x \in \big(\tfrac{-\beta}{\beta+1}, \tfrac{1}{\beta+1}\big).
\]
Since $T_{-\beta}\big((-\beta)^{-1} x\big) = x$ for all $x \in \big(\tfrac{-\beta}{\beta+1}, \tfrac{1}{\beta+1}\big)$, the map $\iota_\beta$ is well defined.

\begin{lemma} \label{l:Ziota}
For any $\beta > 1$, we have 
\[
\mathbb{Z}_{-\beta} = \iota_\beta^{-1}(0)\,.
\]
\end{lemma}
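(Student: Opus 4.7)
The plan is to unwind the definitions directly, using the identity $T_{-\beta}((-\beta)^{-1} z) = z$ for $z \in I := \bigl(\tfrac{-\beta}{\beta+1}, \tfrac{1}{\beta+1}\bigr)$ noted just before the lemma. First I would record that $(-\beta)^{-1}$ maps $I$ into itself (in fact into $\bigl(\tfrac{-1}{\beta(\beta+1)}, \tfrac{1}{\beta+1}\bigr) \subseteq I$), so by iteration $(-\beta)^{-n} z \in I$ and $T_{-\beta}^n((-\beta)^{-n} z) = z$ for every $z \in I$ and every $n \ge 0$. In particular, $\iota_\beta$ is well defined: if $n < n'$ are both admissible, set $z = (-\beta)^{-n} x \in I$ and note $T_{-\beta}^{n'}((-\beta)^{-n'} x) = T_{-\beta}^n\bigl(T_{-\beta}^{n'-n}((-\beta)^{-(n'-n)} z)\bigr) = T_{-\beta}^n(z)$.

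For the inclusion $\iota_\beta^{-1}(0) \subseteq \mathbb{Z}_{-\beta}$: if $\iota_\beta(x) = 0$ with witness $n$, then $y := (-\beta)^{-n} x$ satisfies $y \in T_{-\beta}^{-n}(0)$ and $x = (-\beta)^n y \in \mathbb{Z}_{-\beta}$.

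For the reverse inclusion $\mathbb{Z}_{-\beta} \subseteq \iota_\beta^{-1}(0)$: take $x = (-\beta)^n y$ with $y \in T_{-\beta}^{-n}(0) \subseteq \bigl[\tfrac{-\beta}{\beta+1}, \tfrac{1}{\beta+1}\bigr)$. When $y \in I$, this $n$ is admissible in the definition of $\iota_\beta$ and delivers $\iota_\beta(x) = T_{-\beta}^n(y) = 0$. The only remaining case is the left endpoint $y = \tfrac{-\beta}{\beta+1}$; here a short calculation gives $(-\beta)^{-(n+2)} x = \tfrac{-1}{\beta(\beta+1)} \in I$ and $T_{-\beta}\bigl(\tfrac{-1}{\beta(\beta+1)}\bigr) = \tfrac{-\beta}{\beta+1} = y$, so that $\iota_\beta(x) = T_{-\beta}^{n+1}(y) = T_{-\beta}(0) = 0$, using $\lfloor \tfrac{\beta}{\beta+1} \rfloor = 0$ for $\beta > 1$.

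The only genuine obstacle is this boundary case: $\iota_\beta$ is defined via the open interval $I$, whereas preimages of $0$ live in the half-open domain and can hit the left endpoint $\tfrac{-\beta}{\beta+1}$. Once that case is pushed back into $I$ by two extra applications of $(-\beta)^{-1}$, everything else is a mechanical substitution.
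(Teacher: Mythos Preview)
Your proof is correct and follows essentially the same route as the paper's: both directions are handled by unwinding the definitions, and the only nontrivial point—the boundary case $(-\beta)^{-n}x=\tfrac{-\beta}{\beta+1}$—is resolved in exactly the same way, by passing to $n+2$ and using $T_{-\beta}\bigl(\tfrac{-1}{\beta(\beta+1)}\bigr)=\tfrac{-\beta}{\beta+1}$ together with $T_{-\beta}(0)=0$. Your extra paragraph on well-definedness (that $(-\beta)^{-1}$ maps $I$ into itself, so any two admissible $n$ give the same value) is a small elaboration of the one-line remark the paper makes just before the lemma, not a different argument.
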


\begin{proof}
If $x \in \iota_\beta^{-1}(0)$, then $T_{-\beta}^n\big((-\beta)^{-n} x\big) = 0$ for some $n \ge 0$, thus $x \in (-\beta)^n\, T_{-\beta}^{-n}(0)$, i.e., $x \in \mathbb{Z}_{-\beta}$.
On the other hand, $x \in \mathbb{Z}_{-\beta}$ implies that $x \in (-\beta)^n\, T_{-\beta}^{-n}(0)$ for some $n \ge 0$. 
If $x\, (-\beta)^{-n} \in \big(\tfrac{-\beta}{\beta+1}, \tfrac{1}{\beta+1}\big)$, this immediately implies that $\iota_\beta(x) = 0$.
If $(-\beta)^{-n} x = \tfrac{-\beta}{\beta+1}$, then 
\[
T_{-\beta}^{n+2}\big((-\beta)^{-(n+2)} x\big) = T_{-\beta}^{n+2}\big(\tfrac{-1/\beta}{\beta+1}\big) = T_{-\beta}^{n+1}\big(\tfrac{-\beta}{\beta+1}\big) = T_{-\beta}^{n+1}\big((-\beta)^{-n} x\big) = T_{-\beta}(0) = 0\,,
\]
thus $\iota_\beta(x) = 0$ as well.
\end{proof}

Note that $\iota_\beta(x) = x$ for all $x \in \big(\tfrac{-\beta}{\beta+1}, \tfrac{1}{\beta+1}\big)$.
For other~$x$, we use the following relation.

\begin{lemma} \label{l:iotaT}
For any $\beta > 1$, $x \in \mathbb{R}$, we have 
\[
\iota_\beta(-\beta x) = T_{-\beta}\big(\iota_\beta(x)\big)\,.
\]
\end{lemma}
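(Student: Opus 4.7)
The plan is to reduce the identity to a shift of the index $n$ in the definition of $\iota_\beta$. Given $x \in \mathbb{R}$, one chooses $n \ge 0$ with $y := (-\beta)^{-n} x$ lying in the \emph{open} interval $\big(\tfrac{-\beta}{\beta+1}, \tfrac{1}{\beta+1}\big)$, so that $\iota_\beta(x) = T_{-\beta}^n(y)$. Such an $n$ always exists because $|(-\beta)^{-n} x| = |x|/\beta^n \to 0$, so for $n$ large enough the point lands in the open interval (and the case $x = 0$ is trivial since then $\iota_\beta(x) = 0$ and $T_{-\beta}(0) = 0$).

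The key observation is the simple algebraic identity
\[
(-\beta)^{-(n+1)}(-\beta x) \;=\; (-\beta)^{-n} x \;=\; y,
\]
so the same point $y$ witnesses the definition of $\iota_\beta$ at $-\beta x$ with index $n+1$ in place of $n$. This is permitted because $y$ lies in the open interval, which is exactly the condition demanded in the definition of $\iota_\beta$.

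Applying the definition of $\iota_\beta$ to $-\beta x$ with this choice of index, and then splitting off one application of $T_{-\beta}$, yields
\[
\iota_\beta(-\beta x) \;=\; T_{-\beta}^{n+1}\bigl((-\beta)^{-(n+1)}(-\beta x)\bigr) \;=\; T_{-\beta}^{n+1}(y) \;=\; T_{-\beta}\bigl(T_{-\beta}^n(y)\bigr) \;=\; T_{-\beta}\bigl(\iota_\beta(x)\bigr),
\]
which is the required identity.

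There is no real obstacle: the argument is just an unpacking of definitions, and the only point requiring a word of justification is the well-definedness of the chosen $n$, which has already been handled in the paragraph preceding the lemma using $T_{-\beta}\big((-\beta)^{-1} x\big) = x$ on the open interval.
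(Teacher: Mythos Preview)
Your proof is correct and follows exactly the same route as the paper: pick $n\ge 0$ with $(-\beta)^{-n}x$ in the open interval, observe that $(-\beta)^{-(n+1)}(-\beta x)=(-\beta)^{-n}x$, and unwind the definition to get $\iota_\beta(-\beta x)=T_{-\beta}^{\,n+1}\big((-\beta)^{-n}x\big)=T_{-\beta}\big(\iota_\beta(x)\big)$. The paper's proof is just a one-line version of what you wrote.
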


\begin{proof}
Let $x \in \mathbb{R}$ with $(-\beta)^{-n} x \in \big(\tfrac{-\beta}{\beta+1}, \tfrac{1}{\beta+1}\big)$, $n \ge 0$.
Then we have
\[
\iota_\beta(-\beta x) = T_{-\beta}^{n+1}\big((-\beta)^{-(n+1)} (-\beta x)\big) = T_{-\beta}\, T_{-\beta}^n\big((-\beta)^{-n} x\big) = T_{-\beta}\big(\iota_\beta(x)\big)\,. \qedhere
\]
\end{proof}

An important role in the study of the $(-\beta)$-transformation is played by the orbit of the left endpoint~$\tfrac{-\beta}{\beta+1}$.
In the following, fix $\beta > 1$, and let 
\[
t_n = T_{-\beta}^n\big(\tfrac{-\beta}{\beta+1}\big) \quad (n \ge 0)\,, \quad a_n  = \lfloor - t_0 - \beta t_{n-1} \rfloor \quad (n \ge 1).
\]
(As usual, $\lfloor x\rfloor$ denotes the largest integer $\le x$, and $\lceil x\rceil$ denotes the smallest integer $\ge x$.)
Then $a_1 a_2 \cdots$ is the $(-\beta)$-expansion of~$\tfrac{-\beta}{\beta+1}$, i.e., 
\[
t_n = \sum_{k=1}^\infty \frac{a_{n+k}}{(-\beta)^k} \quad \mbox{for all}\  n \ge 0,
\]
see \cite{Ito-Sadahiro09}.
Setting $t_{-1} = \tfrac{1}{\beta+1}$, $t_\infty = 0$, $\mathbb{N}_\infty = \{0,1,2,\ldots\} \cup \{\infty\}$, we consider open intervals
\[
J_{(i,j)} = (t_{2i}, t_{2j-1}) \quad \mbox{with}\ i, j \in \mathbb{N}_\infty\,,\ 0 \le t_{2i} < t_{2j-1}\ \mbox{or}\ t_{2i} < t_{2j-1} \le 0
\]
(where $2\, \infty = \infty$ and $\infty - 1 = \infty$).
We also set $a_0 = a_\infty = 0$, and 
\[
\mathcal{A} = \{(i,j) \mid i, j \in \mathbb{N}_\infty,\, 0 \le t_{2i} < t_{2j-1}\ \mbox{or}\ t_{2i} < t_{2j-1} \le 0\}\,.
\]
(Here, $(i,j)$ is a pair of elements in $\mathbb{N}_\infty$, and not an open interval.) 
Let 
\[
L_\beta\big((i,j)\big) = t_{2j-1} - t_{2i} \quad \big((i,j) \in \mathcal{A}\big)
\]
be the length of the interval $J_{(i,j)}$, and set
\[
L_\beta(v_1 \cdots v_k) = L_\beta(v_1) + \cdots + L_\beta(v_k)\,, \quad |v_1 \cdots v_k| = k\,,
\]
for any word $v_1 \cdots v_k \in \mathcal{A}^*$, where $\mathcal{A}^*$ denotes the free monoid over~$\mathcal{A}$.

Let $\psi_\beta:\, \mathcal{A}^* \to \mathcal{A}^*$ be an anti-morphism, which is defined on $(i,j) \in \mathcal{A}$ by
\[
\psi_\beta\big((i,j)\big) = (j,i+1) \quad \mbox{if}\ a_{2i+1} = a_{2j},\ t_{2i+1} t_{2j} \ge 0,
\]
and otherwise by
\[
\psi_\beta\big((i,j)\big) = \left\{\!\begin{array}{cl}
(j,\infty)\, \big((\infty,0)\, (0,\infty)\big)^{a_{2i+1}-a_{2j}}\, (\infty,i+1) & \! \mbox{if}\ t_{2i+1} > 0,\,  t_{2j} < 0, \\[1ex] 
(j,0)\, \big((0,\infty)\, (\infty,0)\big)^{a_{2i+1}-a_{2j}-1}\, (0,\infty)\, (\infty,i+1) & \! \mbox{if}\ t_{2i+1} > 0,\,  t_{2j} \ge 0, \\[1ex] 
(j,0)\, \big((0,\infty)\, (\infty,0)\big)^{a_{2i+1}-a_{2j}-1}\, (0,i+1) & \! \mbox{if}\ t_0 < t_{2i+1} \le 0,\,  t_{2j} \ge 0, \\[1ex] 
(j,\infty)\, (\infty,0)\, \big((0,\infty)\, (\infty,0)\big)^{a_{2i+1}-a_{2j}-1}\, (0,i+1) & \! \mbox{if}\ t_0 < t_{2i+1} \le 0,\,  t_{2j} < 0, \\[1ex] 
(j,0)\, \big((0,\infty)\, (\infty,0)\big)^{a_{2i+1}-a_{2j}-1} & \! \mbox{if}\ t_{2i+1} = t_0,\,  t_{2j} \ge 0, \\[1ex] 
(j,\infty)\, (\infty,0)\, \big((0,\infty)\, (\infty,0)\big)^{a_{2i+1}-a_{2j}-1} & \! \mbox{if}\ t_{2i+1} = t_0,\,  t_{2j} < 0.
\end{array}\right.
\]
Here, anti-morphism means that $\psi_\beta(v w) = \psi_\beta(w) \psi_\beta(v)$ for all $v, w \in \mathcal{A}^*$.
The anti-morphism $\psi_\beta$ is naturally extended to infinite words over~$\mathcal{A}$.
(Right infinite words are mapped to left infinite words and vice versa.)

\begin{lemma} \label{l:phiu}
Let $\beta > 1$.
For any $u \in \mathcal{A}$, we have 
\[
L_\beta\big(\psi_\beta(u)\big) = \beta L_\beta(u)\,.
\]
Moreover, for any $1 \le \ell \le |\psi_\beta(u)|$, $0 < x < L_\beta(v_\ell)$, with $\psi_\beta(u) = v_1 \cdots v_{|\psi_\beta(u)|}$, we have
\[
T_{-\beta}\big(t_{2j-1} - \beta^{-1} L_\beta\big(v_1 \cdots v_{\ell-1}\big) - \beta^{-1} x\big) = t_{2i'} + x\,, \quad \mbox{where}\ u = (i,j),\, v_\ell = (i',j').
\] 
\end{lemma}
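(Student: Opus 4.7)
The plan is to interpret $\psi_\beta(u)$ as the labelled decomposition of the image interval $-\beta J_u$ into sub-intervals $J_{v_\ell}$, each translated by an integer into the fundamental domain $[\tfrac{-\beta}{\beta+1},\tfrac{1}{\beta+1})$ of $T_{-\beta}$. First I would derive $-\beta t_n = t_{n+1}+a_{n+1}$ directly from the definition of $T_{-\beta}$ and $a_n$ together with $t_0 = -\tfrac{\beta}{\beta+1}$. This yields the endpoints of $-\beta J_{(i,j)}$ as $(t_{2j}+a_{2j},\,t_{2i+1}+a_{2i+1})$. Setting $y = t_{2j-1} - \beta^{-1}L_\beta(v_1\cdots v_{\ell-1}) - \beta^{-1}x$, one obtains $-\beta y = t_{2j}+a_{2j}+L_\beta(v_1\cdots v_{\ell-1})+x$, so as $\ell$ ranges over $1,\ldots,|\psi_\beta(u)|$ and $x$ over $(0,L_\beta(v_\ell))$, the point $-\beta y$ should sweep $-\beta J_u$ from left to right.

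For the length identity, I would verify in each of the seven cases of the definition that $L_\beta(\psi_\beta(u))$ telescopes to $(t_{2i+1}+a_{2i+1})-(t_{2j}+a_{2j})=\beta L_\beta(u)$. The key observation is $L_\beta((\infty,0))+L_\beta((0,\infty))=\tfrac{1}{\beta+1}+\tfrac{\beta}{\beta+1}=1$, so each inner block $(\infty,0)(0,\infty)$ or $(0,\infty)(\infty,0)$ contributes exactly $1$, giving the integer part $a_{2i+1}-a_{2j}$, while the two outer letters contribute the fractional remainder $t_{2i+1}-t_{2j}$.

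For the transformation formula, applying the definition of $T_{-\beta}$ to $y$ gives
\[
T_{-\beta}(y) = t_{2j}+a_{2j}+L_\beta(v_1\cdots v_{\ell-1})+x - \Bigl\lfloor \tfrac{\beta}{\beta+1}+t_{2j}+a_{2j}+L_\beta(v_1\cdots v_{\ell-1})+x\Bigr\rfloor,
\]
so it suffices to show that the floor equals the integer $M_\ell := a_{2j}+L_\beta(v_1\cdots v_{\ell-1})+t_{2j}-t_{2i'}$. This reduces to two sub-claims: (i)~$M_\ell\in\mathbb{Z}$, and (ii)~$\tfrac{\beta}{\beta+1}+t_{2i'}+x\in[0,1)$. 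Claim (ii) is automatic from $t_{2i'}+x\in J_{v_\ell}\subseteq[\tfrac{-\beta}{\beta+1},\tfrac{1}{\beta+1})$, which holds by the defining conditions on $\mathcal{A}$. I would prove (i) by induction on $\ell$ via the telescoping identity $M_{\ell+1}-M_\ell = t_{2j'_\ell-1}-t_{2i'_{\ell+1}}$, writing $v_\ell=(i'_\ell,j'_\ell)$. The base case $\ell=1$ uses that the first component of $v_1$ equals $j$ in all seven cases, so $t_{2i'_1}=t_{2j}$ and $M_1=a_{2j}$. The inductive step is then immediate: each inner transition $(\infty,0)\to(0,\infty)$ gives $t_{2j'_\ell-1}-t_{2i'_{\ell+1}}=\tfrac{1}{\beta+1}+\tfrac{\beta}{\beta+1}=1$, the transition $(0,\infty)\to(\infty,0)$ gives $0$, and the transitions involving the first or last letter are likewise integer (either $0$ or $1$, depending on the case).

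The main obstacle will be the case-by-case bookkeeping: the first and last letters of $\psi_\beta(u)$ (\emph{e.g.}\ $(j,\infty)$ versus $(j,0)$, and $(\infty,i+1)$ versus $(0,i+1)$), as well as the phase of the inner alternation, all depend on the signs of $t_{2j}$ and $t_{2i+1}$ and on whether $t_{2i+1}=t_0$, producing the seven sub-cases. Once one verifies in each sub-case that the first letter begins in band $a_{2j}$, the last ends in band $a_{2i+1}$, and the inner alternation crosses exactly $a_{2i+1}-a_{2j}$ (or $a_{2i+1}-a_{2j}-1$, according to the sub-case) band boundaries, both the length identity and the floor computation follow uniformly from the telescoping identity above.
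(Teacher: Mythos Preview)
Your proposal is correct and is exactly the kind of direct unfolding of the definitions of $T_{-\beta}$ and $\psi_\beta$ that the paper's one-line proof (``This follows from the definitions of $T_{-\beta}$ and~$\psi_\beta$'') leaves to the reader. Your key identities $-\beta t_n = t_{n+1}+a_{n+1}$, the telescoping of $L_\beta(\psi_\beta(u))$ via the unit blocks $(\infty,0)(0,\infty)$, and the floor computation via $M_\ell\in\mathbb{Z}$ and $t_{2i'}+x\in J_{v_\ell}$ are precisely the ingredients the verification requires; the case-by-case bookkeeping you outline is unavoidable but routine.
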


\begin{proof}
This follows from the definitions of $T_{-\beta}$ and~$\psi_\beta$.
\end{proof}

Let $\cdots u_{-1} u_0 u_1 \cdots \in \mathcal{A}^\mathbb{Z}$ be the fixed point of $\psi_\beta$ such that $u_0 = (\infty,0)$, $u_0 u_1 \cdots$ is a fixed point of~$\psi_\beta^2$ and $\cdots u_{-2} u_{-1} = \psi_\beta(u_0 u_1 \cdots)$, in particular $u_{-1} = (0,\infty)$.
Let 
\[
Y_\beta = \{y_k \mid k \in \mathbb{Z}\} \quad \mbox{with} \quad y_k = \left\{\begin{array}{cl}L_\beta(u_0 \cdots u_{k-1}) & \mbox{if}\ k \ge 0, \\[1ex] - L_\beta(u_k \cdots u_{-1}) & \mbox{if}\ k < 0.\end{array}\right.
\]

\begin{proposition} \label{p:iotaJ}
Let $\beta > 1$.
On every interval $(y_k, y_{k+1})$, $k \in \mathbb{Z}$, the map $\iota_\beta$ is a translation, with
\[
\iota_\beta\big((y_k, y_{k+1})\big) = J_{u_k}\,. 
\]
\end{proposition}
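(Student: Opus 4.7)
The plan is to prove Proposition~\ref{p:iotaJ} by induction, combining the scaling identity $\iota_\beta(-\beta x) = T_{-\beta}(\iota_\beta(x))$ from Lemma~\ref{l:iotaT} with the self-similarity of the bi-infinite word $\cdots u_{-1} u_0 u_1 \cdots$ under $\psi_\beta$ captured by Lemma~\ref{l:phiu}. The base case is $k \in \{-1,0\}$: since $u_0 = (\infty,0)$ and $u_{-1} = (0,\infty)$, one computes directly $J_{u_0} = (t_\infty, t_{-1}) = (0, \tfrac{1}{\beta+1}) = (y_0, y_1)$ and $J_{u_{-1}} = (t_0, t_\infty) = (\tfrac{-\beta}{\beta+1}, 0) = (y_{-1}, y_0)$, and both of these intervals lie inside $(\tfrac{-\beta}{\beta+1}, \tfrac{1}{\beta+1})$, on which $\iota_\beta$ is the identity.

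For the inductive step, suppose the proposition holds on some $(y_{k_0}, y_{k_0+1})$ with $u_{k_0} = (i,j)$, so that $\iota_\beta(x) = x - y_{k_0} + t_{2i}$ there, and write $\psi_\beta(u_{k_0}) = v_1 \cdots v_m$. Using $L_\beta(\psi_\beta(w)) = \beta L_\beta(w)$ together with the defining relations $\psi_\beta(u_0 u_1 \cdots) = \cdots u_{-2} u_{-1}$ and (deduced from $\psi_\beta^2$ fixing $u_0 u_1 \cdots$) $\psi_\beta(\cdots u_{-2} u_{-1}) = u_0 u_1 \cdots$, one identifies an index $k_1 \in \mathbb{Z}$ such that
\[
u_{k_1} u_{k_1+1} \cdots u_{k_1+m-1} = v_1 v_2 \cdots v_m, \qquad -\beta y_{k_0+1} = y_{k_1}, \qquad y_{k_1+\ell} = y_{k_1} + L_\beta(v_1 \cdots v_\ell).
\]
The forward order of $v_1, \ldots, v_m$ inside the sequence $(u_k)$ reflects the cancellation of two orientation reversals: one from $\psi_\beta$ being an anti-morphism, one from multiplication by $-\beta$.

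Once this matching of $\{y_k\}$ and $-\beta\,\{y_k\}$ is in hand, fix $\ell$ and $y \in (y_{k_1+\ell-1}, y_{k_1+\ell})$. Write $y = -\beta x$ with $x \in (y_{k_0}, y_{k_0+1})$; the inductive hypothesis shows that $\iota_\beta(x) = x - y_{k_0} + t_{2i}$ lies in the $\ell$-th piece $(t_{2j-1} - \beta^{-1} L_\beta(v_1 \cdots v_\ell),\, t_{2j-1} - \beta^{-1} L_\beta(v_1 \cdots v_{\ell-1}))$ of $J_{u_{k_0}}$ from Lemma~\ref{l:phiu}. Applying Lemma~\ref{l:iotaT} and then Lemma~\ref{l:phiu} yields $\iota_\beta(y) = T_{-\beta}(\iota_\beta(x)) = t_{2i'} + \tilde x$, where $v_\ell = (i', j')$ and $\tilde x$ is an explicit affine function of $y$. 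Substituting back, one finds $\iota_\beta(y) = y + c_\ell$ for a suitable constant $c_\ell$, since the two slope-$(-\beta)$ factors (from $y \mapsto -y/\beta$ and from $T_{-\beta}$) cancel; the image is $(t_{2i'}, t_{2i'} + L_\beta(v_\ell)) = J_{v_\ell} = J_{u_{k_1+\ell-1}}$, as required.

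Iterating the inductive step from the two base intervals exhausts all $k \in \mathbb{Z}$, since every $u_k$ eventually appears in $\psi_\beta^n(u_0)$ (for even $n$ if $k \ge 0$, for odd $n$ if $k < 0$). The main obstacle is the bookkeeping behind the identification of $-\beta y_{k_0+1}$ with $y_{k_1}$ and of $u_{k_1+\ell-1}$ with $v_\ell$: one must carefully reconcile the anti-morphism's reversal of letter order with the orientation reversal of $x \mapsto -\beta x$ so that the induced sub-partition of $(-\beta y_{k_0+1}, -\beta y_{k_0})$ coincides with the natural partition coming from $\{y_k\}$. Once this alignment is in place, everything else reduces to direct application of Lemmas~\ref{l:iotaT} and~\ref{l:phiu}. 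A minor side issue is the degenerate case $m=1$, where $\psi_\beta(u_{k_0})$ is a single letter and the partition is trivial; the argument still goes through verbatim.
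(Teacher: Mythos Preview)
Your proposal is correct and follows essentially the same route as the paper: the same base case $k\in\{-1,0\}$, the same inductive step passing from $(y_{k_0},y_{k_0+1})$ to the sub-intervals of $(-\beta)(y_{k_0},y_{k_0+1})$ via Lemmas~\ref{l:iotaT} and~\ref{l:phiu}, and the same covering argument to exhaust all $k\in\mathbb{Z}$. The only differences are cosmetic---your index $k_1$ and letters $v_1,\ldots,v_m$ correspond to the paper's $k'$ and $u_{k'},\ldots,u_{k'+|\psi_\beta(u_k)|-1}$ (with an index shift of one), and the paper phrases the final covering as ``$(y_k,y_{k+1})\subset(-\beta)^n\big(\tfrac{-\beta}{\beta+1},\tfrac{1}{\beta+1}\big)$ for some $n$'' rather than ``$u_k$ appears in $\psi_\beta^n(u_0)$''.
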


\begin{proof}
We have $\iota_\beta(x) = x$ on the intervals $\big(y_0, y_1\big) = \big(t_\infty, t_{-1}\big) = \big(0, \tfrac{1}{\beta+1}\big)$ and $\big(y_{-1}, y_0\big) = \big(t_0, t_\infty\big) = \big(\tfrac{-\beta}{\beta+1}, 0\big)$, thus the statement of the proposition holds for $(y_k, y_{k+1}) \subset \big(\tfrac{-\beta}{\beta+1}, \tfrac{1}{\beta+1}\big)$.

Assume that the statement holds for $k \in \mathbb{Z}$.
By Lemma~\ref{l:phiu}, we have $(-\beta) \big(y_k, y_{k+1}\big) = \big(y_{k'}, y_{k'+|\psi_\beta(u_k)|}\big)$ and $\psi_\beta(u_k) = u_{k'} \cdots u_{k'+|\psi_\beta(u_k)|-1}$, with $k' = -|\psi_\beta(u_0 \cdots u_k)|$ if $k \ge 0$, $k' = |\psi_\beta(u_{k+1} \cdots u_{-1})|$ if $k < 0$.
Then Lemma~\ref{l:iotaT}, the assumption $\iota_\beta\big((y_k, y_{k+1})\big) = J_{u_k}$, and Lemma~\ref{l:phiu} yield that
\begin{align*}
\iota_\beta\big((y_{k'+\ell}, y_{k'+\ell+1})\big) & = T_{-\beta}\big(\iota_\beta\big((-\beta)^{-1} (y_{k'+\ell}, y_{k'+\ell+1})\big)\big) \\
& = T_{-\beta}\big(\iota_\beta\big(\big(y_{k+1} - \beta^{-1} L_\beta(u_{k'} \cdots u_{k'+\ell}),\, y_{k+1} - \beta^{-1} L_\beta(u_{k'} \cdots u_{k'+\ell-1})\big)\big)\big) \\
& = T_{-\beta}\big(\big(t_{2j-1} - \beta^{-1} L_\beta(u_{k'} \cdots u_{k'+\ell}),\, t_{2j-1} - \beta^{-1} L_\beta(u_{k'} \cdots u_{k'+\ell-1})\big)\big) \\
& = J_{u_{k'+\ell}}
\end{align*}
for all $0 \le \ell < |\psi_\beta(u_k)|$, where $u_k = (i,j)$.

By induction on~$n$, we obtain for every $n \ge 0$ that the statement of the proposition holds for all $k \in \mathbb{Z}$ with $(y_k, y_{k+1}) \subset (-\beta)^n \big(\tfrac{-\beta}{\beta+1}, \tfrac{1}{\beta+1}\big)$, thus it holds for all $k \in \mathbb{Z}$.
\end{proof}

Now we describe the set $Y_\beta = \{y_k \mid k \in \mathbb{Z}\}$, which is left out by the intervals $(y_k, y_{k+1})$. 

\begin{lemma} \label{l:Y}
For any $\beta > 1$, we have
\[
Y_\beta =\, \mathbb{Z}_{-\beta}\, \cup \bigcup_{m,n\ge0} (-\beta)^{m+n}\, T_{-\beta}^{-n}\big(\tfrac{-\beta}{\beta+1}\big)\,.
\]
\end{lemma}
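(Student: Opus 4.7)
The plan is to prove the two inclusions separately, using Proposition~\ref{p:iotaJ} as the main structural ingredient and Lemma~\ref{l:Ziota} to handle the $\mathbb{Z}_{-\beta}$ part.

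For $\mathbb{Z}_{-\beta} \cup \bigcup_{m,n \ge 0}(-\beta)^{m+n} T_{-\beta}^{-n}(t_0) \subseteq Y_\beta$: if $x \in \mathbb{Z}_{-\beta}$, then $\iota_\beta(x) = 0$ by Lemma~\ref{l:Ziota}; assuming $x \in (y_k, y_{k+1})$ would give $0 = \iota_\beta(x) \in J_{u_k} \subseteq (t_0,0) \cup (0,t_{-1})$ by Proposition~\ref{p:iotaJ} and the definition of $\mathcal{A}$, a contradiction. For $x = (-\beta)^{m+n} z$ with $z \in T_{-\beta}^{-n}(t_0)$, I argue by induction on $m + n$. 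The base case $m + n = 0$ gives $x = t_0 = y_{-1} \in Y_\beta$. If $m \ge 1$, then $(-\beta)^{-1} x = (-\beta)^{(m-1)+n} z \in Y_\beta$ by induction, so $x = (-\beta)\cdot(-\beta)^{-1} x \in (-\beta) Y_\beta \subseteq Y_\beta$, where the last inclusion follows from the identity $(-\beta)(y_k, y_{k+1}) = (y_{k'}, y_{k'+|\psi_\beta(u_k)|})$ derived in the proof of Proposition~\ref{p:iotaJ}. In the remaining case $m = 0$, $n \ge 1$: if $z = t_0$ I rewrite $x = (-\beta)^n t_0 \in (-\beta)^{n+0} T_{-\beta}^{-0}(t_0)$ to reduce to the $m \ge 1$ branch; otherwise $z \in (t_0, t_{-1})$, so $\iota_\beta(x) = T_{-\beta}^n(z) = t_0$, and since $t_{2i} \ge t_0$ for every $(i,j) \in \mathcal{A}$ we have $t_0 \notin J_{u_k}$ for any $k$, forcing $x \notin (y_k, y_{k+1})$, whence $x \in Y_\beta$.

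For $Y_\beta \subseteq \mathbb{Z}_{-\beta} \cup \bigcup_{m,n \ge 0}(-\beta)^{m+n} T_{-\beta}^{-n}(t_0)$: I compute $\iota_\beta(y_k)$ from Proposition~\ref{p:iotaJ} together with the defining formula $\iota_\beta(x) = T_{-\beta}^N((-\beta)^{-N} x)$. The translation on $(y_k, y_{k+1})$ carries $y_k$ to $t_{2i}$ (where $u_k = (i,j)$) as a one-sided limit, and carries $y_{k+1}$ to $t_{2j-1}$; since $\iota_\beta$ takes values in the half-open interval $[t_0, t_{-1})$, the value $\iota_\beta(y_k)$ is one of these candidates, never $t_{-1}$, and thus always lies in $\{0\} \cup \{t_m : m \ge 0\}$. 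The case $\iota_\beta(y_k) = 0$ yields $y_k \in \mathbb{Z}_{-\beta}$ by Lemma~\ref{l:Ziota}. The case $\iota_\beta(y_k) = t_m$ requires the sharper conclusion that $(-\beta)^{-N} y_k \in T_{-\beta}^{-(N-m)}(t_0)$, i.e., that the $T_{-\beta}$-orbit of the preimage actually passes through $t_0$ at step $N - m$; this follows by induction on the level $N$ of $y_k$ (the smallest $N$ with $y_k \in (-\beta)^N[t_0,t_{-1})$), tracking the $\psi_\beta$-refinement of the base partition $\{t_0,0\}$ of $[t_0, t_{-1})$ and using Lemma~\ref{l:phiu} to match the new boundary values with iterates starting from $t_0$.

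The main obstacle will be this last step: showing that the $\psi_\beta$-construction of $y_k$ forces the preimage chain $(-\beta)^{-N} y_k, T_{-\beta}((-\beta)^{-N} y_k), \ldots$ to hit $t_0$ at precisely step $N - m$, and not merely to sit above some other preimage of $t_m$. I would verify this by an induction mirroring the proof of Proposition~\ref{p:iotaJ}, handling each of the six cases in the definition of $\psi_\beta$ to check that the new $y$'s introduced at each level correspond to preimages of $t_0$ through the boundary values $t_{2i}$, $t_{2j-1}$ of the relevant $J$-intervals.
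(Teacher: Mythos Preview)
Your inclusion $\supseteq$ is correct and essentially the paper's argument: use that $0,t_0\notin J_{u_k}$ for all $k$ (from the definition of~$\mathcal{A}$), together with $(-\beta)Y_\beta\subseteq Y_\beta$.

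For the inclusion $\subseteq$ your plan works in principle, but you have made it harder than necessary and the ``main obstacle'' you identify is self-inflicted. The paper avoids it with the following shortcut. Given $x\in Y_\beta\setminus\{0\}$, choose the \emph{smallest} $m\ge 0$ with $(-\beta)^{-m-1}x\notin Y_\beta$; such an $m$ exists because $(-\beta)^{-m-1}x\to 0$ and $(t_0,0)\cup(0,t_{-1})$ is disjoint from $Y_\beta$. Then $(-\beta)^{-m-1}x\in(y_{k'},y_{k'+1})$ for some $k'$, while $(-\beta)^{-m}x\in Y_\beta$ lies strictly inside the refined block $(-\beta)(y_{k'},y_{k'+1})=(y_{k''},y_{k''+|\psi_\beta(u_{k'})|})$. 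Now look at the explicit formula for $\psi_\beta((i,j))$: every letter except possibly the first and the last has first coordinate in $\{0,\infty\}$, and the transitions between consecutive letters are always of type $(\cdot,0)(0,\cdot)$ or $(\cdot,\infty)(\infty,\cdot)$. Hence, using Lemma~\ref{l:iotaT} and Proposition~\ref{p:iotaJ}, one obtains directly $\iota_\beta\big((-\beta)^{-m}x\big)\in\{t_0,0\}$. This gives $x\in(-\beta)^m\iota_\beta^{-1}(t_0)$ or $x\in(-\beta)^m\mathbb{Z}_{-\beta}\subseteq\mathbb{Z}_{-\beta}$, and you are done. No general $t_m$ ever appears, so there is nothing to trace back through a chain of preimages.

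Two smaller remarks on your write-up. First, the sentence ``$\iota_\beta(y_k)$ is one of these candidates'' is not yet justified: you are asserting that the actual value of $\iota_\beta$ at a discontinuity agrees with one of its one-sided limits, which requires an argument about the one-sided continuity of $T_{-\beta}$ (and hence of $\iota_\beta$). The paper handles this issue only later, in the proof of Theorem~\ref{t:Zu}, and for the present lemma it is bypassed entirely by the minimal-$m$ trick. Second, the induction you sketch at the end would, once made precise, essentially rediscover the paper's argument: the ``new'' points $y_k$ created at level $N+1$ are exactly the interior vertices of the $\psi_\beta$-refinement, where $\iota_\beta$ is forced into $\{t_0,0\}$ for the reason above.
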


\begin{proof}
First note that 
\[
\bigcup_{m,n\ge0} (-\beta)^{m+n}\, T_{-\beta}^{-n}\big(\tfrac{-\beta}{\beta+1}\big) = \bigcup_{m\ge0} (-\beta)^m\, \iota_\beta^{-1}\big(\tfrac{-\beta}{\beta+1}\big)\,,
\]
similarly to Lemma~\ref{l:Ziota}.
Indeed, $x \in \iota_\beta^{-1}(t_0)$ is equivalent to $(-\beta)^{-n} x \in T_{-\beta}^{-n}(t_0) \cap (t_0,t_{-1})$ for some $n \ge 0$.
In the remaining case $(-\beta)^{-n} x = t_0 \in T_{-\beta}^{-n}(t_0)$, we have $x \in (-\beta)\, \iota_\beta^{-1}(t_0)$ since $T_{-\beta}^{n+1}\big((-\beta)^{-(n+2)} x\big) = T_{-\beta}^n(t_0) = t_0$.
Note also that $(-\beta)^m\, \iota_\beta^{-1}(t_0) \subseteq \iota_\beta^{-1}(t_m)$.

Since $t_0 \not\in J_{u_k}$ and $0 \not\in J_{u_k}$ for all $k \in \mathbb{Z}$ by the definition of~$\mathcal{A}$, Proposition~\ref{p:iotaJ} implies that $\iota_\beta^{-1}(0) \cup \iota_\beta^{-1}(t_0) \subseteq Y_\beta$, thus $\mathbb{Z}_{-\beta} \subseteq Y_\beta$ by Lemma~\ref{l:Ziota}.
Since $(-\beta) Y_\beta \subseteq Y_\beta$ by Lemma~\ref{l:phiu}, we obtain $(-\beta)^m\, \iota_\beta^{-1}(t_0) \subseteq Y_\beta$ for every $m \ge 0$ as well.

Let now $x \in Y_\beta \setminus \{0\}$.
Then there exists some $m \ge 0$ such that $(-\beta)^{-m} x \in Y_\beta$ and $(-\beta)^{-m-1} x \not\in Y_\beta$, i.e., $(-\beta)^{-m-1} x \in (y_k, y_{k+1})$ for some $k \in \mathbb{Z}$.
By Lemma~\ref{l:iotaT} and the definition of~$\psi_\beta$, we obtain that $\iota_\beta\big((-\beta)^{-m} x\big) \in \{t_0, 0\}$, i.e., $x \in (-\beta)^m\, \iota_\beta^{-1}(t_0)$ or $x \in (-\beta)^m\, \mathbb{Z}_{-\beta} \subseteq \mathbb{Z}_{-\beta}$.
Since $0 \in \mathbb{Z}_{-\beta}$, this proves the lemma.
\end{proof}

\begin{theorem} \label{t:Zu}
Let $\beta > 1$, $x \in \mathbb{R}$.
Then $x \in \mathbb{Z}_{-\beta}$ if and only if $x = y_k$ for some $k \in \mathbb{Z}$ with $t_{2j-1} = 0$ or $t_{2i} = 0$, where $u_{k-1} = (i',j)$, $u_k = (i,j')$.
\end{theorem}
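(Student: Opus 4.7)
By Lemma~\ref{l:Ziota} we have $\mathbb{Z}_{-\beta} = \iota_\beta^{-1}(0)$, and by Lemma~\ref{l:Y} this is a subset of $Y_\beta$. Moreover, the definition of $\mathcal{A}$ forces $0 \notin J_{(i,j)}$ for every $(i,j) \in \mathcal{A}$, so by Proposition~\ref{p:iotaJ} the preimage $\iota_\beta^{-1}(0)$ meets each open interval $(y_{k-1}, y_k)$ trivially. The problem therefore reduces to determining, for each $k \in \mathbb{Z}$, whether $\iota_\beta(y_k) = 0$. Proposition~\ref{p:iotaJ} also gives the two one-sided limits of $\iota_\beta$ at $y_k$: from the left the value tends to $t_{2j-1}$ (the right endpoint of $J_{u_{k-1}}$), and from the right it tends to $t_{2i}$ (the left endpoint of $J_{u_k}$).

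The plan is to establish the key assertion that $\iota_\beta(y_k) \in \{t_{2j-1}, t_{2i}\}$, with the strengthening that $\iota_\beta(y_k) = 0$ whenever either endpoint equals $0$. This gives the theorem: the inclusion yields the ``only if'' direction, and the strengthened form gives the ``if'' direction.

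I would prove the assertion by induction on the smallest $n \ge 0$ such that $(-\beta)^{-n} y_k$ lies in the open fundamental interval. The base cases correspond to $y_k \in \{y_{-1}, y_0, y_1\} = \{t_0, 0, t_{-1}\}$: direct computation from the definition of $\iota_\beta$ (noting incidentally that $-\tfrac{1}{\beta+1}$ is a fixed point of $T_{-\beta}$) yields $\iota_\beta(y_0) = 0$, $\iota_\beta(y_{-1}) = t_1$, and $\iota_\beta(y_1) = t_0$. One then applies $\psi_\beta$ to the letters $(\infty,0)$ and $(0,\infty)$ to identify $u_{-2}$ and $u_1$ in each sign case for $t_1$, and checks the assertion. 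For the inductive step, if $y_k$ is not in the base, write $y_k = -\beta y_{k^*}$ with $y_{k^*}$ at lower level; Lemma~\ref{l:iotaT} then gives $\iota_\beta(y_k) = T_{-\beta}(\iota_\beta(y_{k^*}))$, and Lemma~\ref{l:phiu} identifies the letters $u_{k-1}, u_k$ adjacent to $y_k$ as consecutive positions inside $\psi_\beta(u_{k^*-1})$ or $\psi_\beta(u_{k^*})$. A case analysis on the six branches of $\psi_\beta$ then shows that $T_{-\beta}$ carries the pair $(t_{2j^*-1}, t_{2i^*})$ (adjacent endpoints of $y_{k^*}$) onto the pair $(t_{2j-1}, t_{2i})$ (adjacent endpoints of $y_k$), while preserving the vanishing condition.

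The principal obstacle is precisely this case analysis. For each of the six defining branches of $\psi_\beta$ and each position inside the resulting word (first letter, internal letters of the $((0,\infty)(\infty,0))^{\bullet}$ block, last letter), one must verify that $T_{-\beta}$ realises the correct correspondence of endpoints, and check that $\iota_\beta(y_k)$ actually picks up the vanishing endpoint whenever one is present. The fact that the base case already involves the $T_{-\beta}$-fixed point $-\tfrac{1}{\beta+1}$ is a hint that this boundary analysis is not entirely automatic, and the tracking must be done branch by branch.
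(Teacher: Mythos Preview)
Your reduction to the points $y_k$ via Lemmas~\ref{l:Ziota}, \ref{l:Y} and Proposition~\ref{p:iotaJ} is exactly right, and the observation that the two one-sided limits of $\iota_\beta$ at $y_k$ are $t_{2j-1}$ and $t_{2i}$ is the correct starting point. However, the inductive invariant you propose is too weak to close the induction. Knowing only that $\iota_\beta(y_{k^*}) \in \{t_{2j_1-1}, t_{2i_2}\}$ together with your ``strengthening'' does not force $\iota_\beta(y_k)=0$ when, say, $t_{2i_2+1}=0$: if $\iota_\beta(y_{k^*})$ happened to be the lower-indexed value $t_{2j_1-1}$ with $2j_1-1 = 2i_2-1$, then $\iota_\beta(y_k)=t_{2i_2}$, and $t_{2i_2+1}=0$ does not imply $t_{2i_2}=0$. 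What is actually true (and what your case analysis would eventually reveal) is that the two adjacent indices $2j-1$ and $2i$ are always consecutive integers $\{m-1,m\}$ (or both~$\infty$), and that $\iota_\beta(y_k)=t_m$, the higher-indexed one. With this sharper invariant the induction closes immediately, because $t_{m-1}=0$ forces $t_m=T_{-\beta}(0)=0$.

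The paper avoids your level-by-level induction and the six-branch case analysis of~$\psi_\beta$ altogether. Instead of descending one step at a time, it jumps straight to the smallest $m$ with $(-\beta)^{-m-1}y_k\notin Y_\beta$; at that bottom level the proof of Lemma~\ref{l:Y} already gives $\iota_\beta\big((-\beta)^{-m}y_k\big)\in\{0,t_0\}$. In the first case $\iota_\beta$ is genuinely continuous at $y_k$ and both adjacent endpoints vanish. In the second case one-sided continuity of $\iota_\beta$ (right limit $t_0$, left limit $t_{-1}$ at the bottom) propagates through $T_{-\beta}^m$ to identify the adjacent indices directly as $j=\lceil m/2\rceil$, $i=\lfloor m/2\rfloor$, so that $\{2j-1,2i\}=\{m-1,m\}$ and $\iota_\beta(y_k)=t_m$. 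The equivalence then reduces to the single observation $t_{m-1}=0\Rightarrow t_m=0$. Your approach would also get there, but only after strengthening the invariant and working through all the internal and boundary positions of each $\psi_\beta$-image; the paper's shortcut replaces all of that with a two-case dichotomy at the bottom.
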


\begin{proof}
Since $\mathbb{Z}_{-\beta} \subseteq Y_\beta$ by Lemma~\ref{l:Y}, it is sufficient to consider $x = y_k$, $k \in \mathbb{Z}$.
As in the proof of Lemma~\ref{l:Y}, let $m \ge 0$ be such that $(-\beta)^{-m} x \in Y_\beta$ and $(-\beta)^{-m-1} x \not\in Y_\beta$.
Then we have $\iota_\beta\big((-\beta)^{-m} x\big) \in \{t_0, 0\}$.

If $\iota_\beta\big((-\beta)^{-m} x\big) = 0$, then $\iota_\beta(x) = T_{-\beta}^m(0) = 0$.
Moreover, $\iota_\beta$ is continuous at $(-\beta)^{-m} x$ in this case. 
Together with the continuity of $T_{-\beta}^m$ at~$0$, this implies that $\iota_\beta$ is continuous at~$x$, i.e., $u_{k-1} = (i',j)$ with $t_{2j-1} = 0$ and $u_k = (i,j')$ with $t_{2i} = 0$.

If $\iota_\beta\big((-\beta)^{-m} x\big) = t_0$, then $\iota_\beta$ is right continuous at $(-\beta)^{-m} x$, and its limit from the left is~$t_{-1}$. 
We obtain that $\iota_\beta(x) = t_m$, $u_{k-1} = (i',\lceil m/2 \rceil)$ and $u_k = (\lfloor m/2 \rfloor,j')$ for some $i',j' \in \mathbb{N}_\infty$.
Let $i = \lfloor m/2 \rfloor$, $j = \lceil m/2 \rceil$.
Since $2i = m$ if $m$ is even and $2j-1 = m$ if $m$ is odd, $x \in \mathbb{Z}_{-\beta}$ implies that $t_{2i} = 0$ or $t_{2j-1} = 0$. 
On the other hand, $t_{2i} = 0$ or $t_{2j-1} = 0$ yields that $t_{m-1} = 0$ or $t_m = 0$. 
Since $t_{m-1} = 0$ implies $t_m = 0$, we must have $x \in \mathbb{Z}_{-\beta}$.
\end{proof}

By Theorem~\ref{t:Zu}, the study of $\mathbb{Z}_{-\beta}$ is reduced to the study of the fixed point of~$\psi_\beta$.
Note that $(i,j)$ and $(i',j')$ can be identified when $t_{2i} = t_{2i'}$ and $t_{2j-1} = t_{2j'-1}$.
After identification, $\mathcal{A}$~is finite if and only if $\{t_n \mid n \ge 0\}$ is a finite set, i.e., if $\beta$ is an Yrrap number. 

With the help of~$\psi_\beta$, we can construct an anti-morphism describing the structure of~$\mathbb{Z}_{-\beta}$ (for $\beta \ge \frac{1+\sqrt{5}}{2}$), similarly to \cite{Steiner}.
First note that $\beta \ge \frac{1+\sqrt{5}}{2}$ implies $a_1 = 1$, $t_1 \ge 0$, or $a_1 \ge 2$, thus 
$\psi_\beta^2\big((\infty,0)\big) = \psi_\beta\big((0,\infty)\big)$ starts with $(\infty,0)\, (0,\infty)$.
Therefore, $L_\beta\big((\infty,0)\, (0,\infty)\big) = t_{-1} - t_0 = 1$ is the smallest positive element of~$\mathbb{Z}_{-\beta}$.
Using Theorem~\ref{t:Zu} and $(-\beta) \mathbb{Z}_{-\beta} \subseteq \mathbb{Z}_{-\beta}$, the word $\psi_\beta\big((\infty,0)\, (0,\infty)\big)$ determines the set $\mathbb{Z}_{-\beta} \cap [-\beta,0]$.
Splitting up $\psi_\beta\big((\infty,0)\, (0,\infty)\big)$ according to Theorem~\ref{t:Zu} and applying $\psi_\beta$ on each of the factors yields the set $\mathbb{Z}_{-\beta} \cap [0,\beta^2]$, etc.
Consider all these factors, i.e., all words $u_k \cdots u_{k'-1}$ between consecutive elements $y_k, y_{k'} \in \mathbb{Z}_{-\beta}$, as letters.
Using the described strategy, we define an anti-morphism $\varphi_{-\beta}$ on words consisting of these letters.
Then the fixed point of $\varphi_{-\beta}$ codes the distances between the elements of~$\mathbb{Z}_{-\beta}$, see the two simple examples below. 
For more complicated examples, we refer to~\cite{Steiner}.
By \cite{Ambroz-Dombek-Masakova-Pelantova,Steiner}, the alphabet is finite if and only if $\beta$ is an Yrrap number.

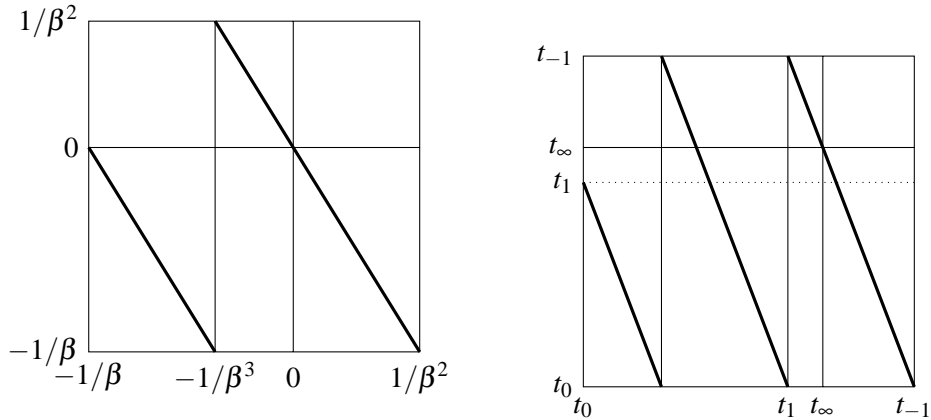
\begin{figure}[ht]
\centering
\begin{tikzpicture}[scale=4.4]
\draw(-.618,0)node[left]{$0$}--(.382,0) (0,-.618)node[below]{$\vphantom{\beta^3}0$}--(0,.382)
(-.618,-.618)node[below]{$-1/\beta$}node[left]{$-1/\beta$}--(.382,-.618)node[below]{$1/\beta^2$}--(.382,.382)--(-.618,.382)node[left]{$1/\beta^2$}--cycle
(-.2361,-.618)node[below]{$-1/\beta^3$}--(-.2361,.382);
\draw[very thick](-.618,0)--(-.2361,-.618) (-.2361,.382)--(.382,-.618);

\begin{scope}[shift={(1.6,0)}]
\draw(-.7236,0)node[left]{$t_\infty$}--(.2764,0) (0,-.7236)node[below]{$t_\infty$}--(0,.2764)
(-.7236,-.7236)node[below]{$t_0$}node[left]{$t_0$}--(.2764,-.7236)node[below]{$t_{-1}$}--(.2764,.2764)--(-.7236,.2764)node[left]{$t_{-1}$}--cycle
(-.4875,-.7236)--(-.4875,.2764) (-.1056,-.7236)node[below]{$t_1$}--(-.1056,.2764);
\draw[very thick](-.7236,-.1056)--(-.4875,-.7236) (-.4875,.2764)--(-.1056,-.7236) (-.1056,.2764)--(.2764,-.7236);
\draw[dotted](-.7236,-.1056)node[left]{$t_1$}--(.2764,-.1056);
\end{scope}
\end{tikzpicture}
\caption{The $(-\beta)$-transformation for $\beta = \frac{1+\sqrt5}{2}$ (left) and $\beta = \frac{3+\sqrt5}{2}$ (right).}
\label{f:Tmbeta}
\end{figure}

\begin{example} \label{x:gm2}
Let $\beta = \frac{3+\sqrt5}{2} \approx 2.618$, i.e., $\beta^2 = 3 \beta - 1$.
Then we have $t_1 = \frac{\beta^2}{\beta+1} - 2 = \frac{-1/\beta}{\beta+1}$ and $t_2 = \frac{1}{\beta+1} - 1 = t_0$.
Therefore, we can identify $(0,\infty)$ and $(1,\infty)$, and obtain 
\begin{align*}
\psi_\beta: \quad (\infty,0) & \mapsto (0,\infty)\,, \\
(0,\infty) & \mapsto (\infty,0)\, (0,\infty)\, (\infty,0)\, (0,1)\,, \\
(0,1) & \mapsto (0,\infty)\, (\infty,0)\, (0,1)\,.
\end{align*}
The two-sided fixed point $\cdots u_{-2} u_{-1} \dot{~} u_0 u_1 \cdots$ of $\psi_\beta$ is equal to
\[
\cdots (0,\infty) (\infty,0) (0,1)\, (0,\infty)\, (\infty,0) (0,\infty) (\infty,0)\, (0,1)\, (0,\infty) \dot{~} (\infty,0) (0,\infty) (\infty,0) (0,1) \cdots\,.
\]
Applying $\psi_\beta$ to $(\infty,0)\, (0,\infty)$ and to the factors described by Theorem~\ref{t:Zu} yields 
\begin{align*}
(\infty,0)\, (0,\infty) & \mapsto (\infty,0)\, (0,\infty)\ (\infty,0)\, (0,1)\, (0,\infty)\,, \\
(\infty,0)\, (0,1)\, (0,\infty) & \mapsto (\infty,0)\, (0,\infty)\ (\infty,0)\, (0,1)\, (0,\infty)\ (\infty,0)\, (0,1)\, (0,\infty)\,.
\end{align*}
Therefore, setting $A = (\infty,0)\, (0,\infty)$ and $B = (\infty,0)\, (0,1)\, (0,\infty)$, the fixed point 
\[
\cdots A B B \, A B \, A B B \, A B B \, A B \,\dot{~} A B B \, A B \, A B B \, A B B \, A B \, \cdots
\]
of the anti-morphism
\[
\varphi_{-\beta}: \quad A \mapsto A B \,,  \quad B \mapsto A B B\,, 
\]
describes the set of $(-\beta)$-integers, see Figure~\ref{f:intgm2}.
The distances between consecutive elements of $\mathbb{Z}_{-\beta}$ are $L_\beta(A) = 1$ and $L_\beta(B) = \beta - 1 > 1$.
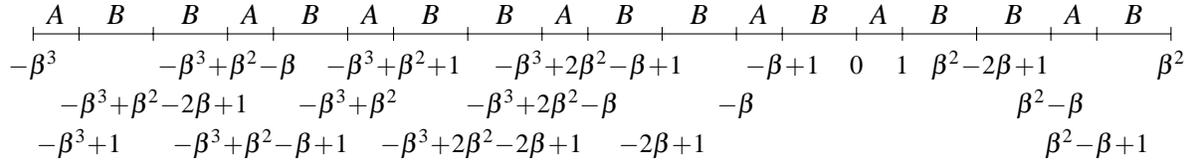
\begin{figure}[ht]
\centering
\begin{tikzpicture}[scale=.61]
\draw(-17.944,-.15)node[below]{\small$-\beta^3$}--(-17.944,.15) (-16.944,-.15)node[below=6ex]{\small$-\beta^3\!+\!1$}--(-16.944,.15) (-15.326,-.15)node[below=3ex]{\small$-\beta^3\!+\!\beta^2\!-\!2\beta\!+\!1$}--(-15.326,.15) (-13.708,-.15)node[below]{\small$-\beta^3\!+\!\beta^2\!-\!\beta$}--(-13.708,.15)  (-12.708,-.15)node[below=6ex]{\small$\hspace{-1em}-\beta^3\!+\!\beta^2\!-\!\beta\!+\!1$}--(-12.708,.15) (-11.09,-.15)node[below=3ex]{\small$-\beta^3\!+\!\beta^2$}--(-11.09,.15) (-10.09,-.15)node[below]{\small$-\beta^3\!+\!\beta^2\!+\!1$}--(-10.09,.15) (-8.472,-.15)node[below=6ex]{\small$\hspace{1em}-\beta^3\!+\!2\beta^2\!-\!2\beta\!+\!1$}--(-8.472,.15) (-6.854,-.15)node[below=3ex]{\small$-\beta^3\!+\!2\beta^2\!-\!\beta$}--(-6.854,.15) (-5.854,-.15)node[below]{\small$-\beta^3\!+\!2\beta^2\!-\!\beta\!+\!1$}--(-5.854,.15) (-4.236,-.15)node[below=6ex]{\small$-2\beta\!+\!1\vphantom{\beta^2}$}--(-4.236,.15) (-2.618,-.15)node[below=3ex]{\small$-\beta\vphantom{\beta^2}$}--(-2.618,.15) (-1.618,-.15)node[below]{\small$-\beta\!+\!1\vphantom{\beta^2}$}--(-1.618,.15) (0,-.15)node[below]{\small$0\vphantom{\beta^2}$}--(0,.15) (1,-.15)node[below]{\small$1\vphantom{\beta^2}$}--(1,.15) (2.618,-.15)node[below]{\small$\hspace{1em}\beta^2\!-\!2\beta\!+\!1$}--(2.618,.15) (4.236,-.15)node[below=3ex]{\small$\beta^2\!-\!\beta$}--(4.236,.15) (5.236,-.15)node[below=6ex]{\small$\beta^2\!-\!\beta\!+\!1$}--(5.236,.15) (6.854,-.15)node[below]{\small$\beta^2$}--(6.854,.15) 
(-17.944,0)--node[above]{$A$}(-16.944,0)--node[above]{$B$}(-15.326,0)--node[above]{$B$}(-13.708,0)--node[above]{$A$}(-12.708,0)--node[above]{$B$}(-11.09,0)--node[above]{$A$}(-10.09,0)--node[above]{$B$}(-8.472,0)--node[above]{$B$}(-6.854,0)--node[above]{$A$}(-5.854,0)--node[above]{$B$}(-4.236,0)--node[above]{$B$}(-2.618,0)--node[above]{$A$}(-1.618,0)--node[above]{$B$}(0,0)--node[above]{$A$}(1,0)--node[above]{$B$}(2.618,0)--node[above]{$B$}(4.236,0)--node[above]{$A$}(5.236,0)--node[above]{$B$}(6.854,0);
\end{tikzpicture}
\caption{The set $\mathbb{Z}_{-\beta} \cap [-\beta^3, \beta^2]$, $\beta = (3+\sqrt5)/2$.}
\label{f:intgm2}
\end{figure}
\end{example}

\begin{example} \label{x:gm}
Let $\beta = \frac{1+\sqrt5}{2} \approx 1.618$, i.e., $\beta^2 = \beta + 1$.
Then we have $t_1 = \frac{\beta^2}{\beta+1} - 1 = 0 = t_\infty$.
Identifying $(0,\infty)$ and $(0,1)$, we obtain 
\[
\psi_\beta: \quad (\infty,0) \mapsto (0,\infty)\,, \quad (0,\infty) \mapsto (\infty,0)\, (0,\infty)\,.
\]
The two-sided fixed point $\cdots u_{-2} u_{-1} \dot{~} u_0 u_1 \cdots$ of $\psi_\beta$ is equal to
\[
\cdots (\infty,0) (0,\infty)\, (0,\infty)\, \dot{~} (\infty,0) (0,\infty)\, (\infty,0) (0,\infty)\, (0,\infty)\, \cdots\,.
\]
The words $u_k \cdots u_{k'-1}$ between consecutive elements $y_k, y_{k'} \in \mathbb{Z}_{-\beta}$ are $A = (\infty,0)\, (0,\infty)$ and $B = (0,\infty)$, since
\[
(\infty,0)\, (0,\infty) \mapsto (\infty,0)\, (0,\infty)\ (0,\infty)\,, \quad (0,\infty) \mapsto (\infty,0)\, (0,\infty)\,.
\]
Note that $B$ does not start with a letter $(i,j)$ with $t_{2i} = 0$, thus $\iota_\beta$ is discontinuous at the corresponding points $y_k \in \mathbb{Z}_{-\beta}$.
The fixed point 
\[
\cdots A \, A B \, A \, A B \, A B \, A \, A B \, A B \,\dot{~} A \, A B \, A \, A B \, A B \, A \, A B \, A \, A B \, A B \, A \, A B \, A B \, \cdots
\]
of the anti-morphism
\[
\varphi_{-\beta}: \quad A \mapsto A B \,,  \quad B \mapsto A \,, 
\]
describes the set of $(-\beta)$-integers, with $L_\beta(A) = 1$ and $L_\beta(B) = \beta - 1 < 1$, see Figure~\ref{f:intgm}.
Note that $(-\beta)^n$ can also be represented as $(-\beta)^{n+2} + (-\beta)^{n+1}$.
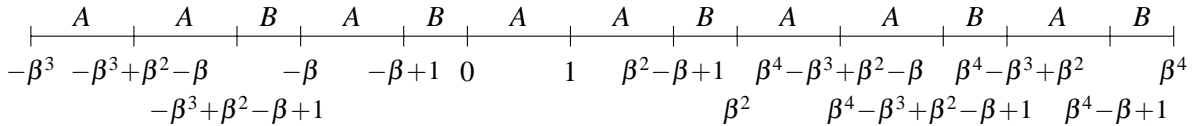
\begin{figure}[ht]
\centering
\begin{tikzpicture}[scale=1.37]
\draw(-4.236,-.1)node[below]{\small$-\beta^3$}--(-4.236,.1) (-3.236,-.1)node[below]{\small$\hspace{.5em}-\beta^3\!+\!\beta^2\!-\!\beta$}--(-3.236,.1) (-2.236,-.1)node[below=3ex]{\small$-\beta^3\!+\!\beta^2\!-\!\beta\!+\!1$}--(-2.236,.1) (-1.618,-.1)node[below]{\small$-\beta\vphantom{\beta^2}$}--(-1.618,.1) (-.618,-.1)node[below]{\small$-\beta\!+\!1\vphantom{\beta^2}$}--(-.618,.1) (0,-.1)node[below]{$0\vphantom{\beta^2}$}--(0,.1) (1,-.1)node[below]{$1\vphantom{\beta^2}$}--(1,.1) (2,-.1)node[below]{\small$\beta^2\!-\!\beta\!+\!1$}--(2,.1) (2.618,-.1)node[below=3ex]{\small$\beta^2$}--(2.618,.1) (3.618,-.1)node[below]{\small$\beta^4\!-\!\beta^3\!+\!\beta^2\!-\!\beta$}--(3.618,.1) (4.618,-.1)node[below=3ex]{\small$\hspace{-1em}\beta^4\!-\!\beta^3\!+\!\beta^2\!-\!\beta\!+\!1$}--(4.618,.1) (5.236,-.1)node[below]{\small$\hspace{1em}\beta^4\!-\!\beta^3\!+\!\beta^2$}--(5.236,.1) (6.236,-.1)node[below=3ex]{\small$\hspace{.5em}\beta^4\!-\!\beta\!+\!1$}--(6.236,.1) (6.854,-.1)node[below]{\small$\beta^4$}--(6.854,.1) 
 (-4.236,0)--node[above]{$A$}(-3.236,0)--node[above]{$A$} (-2.236,0)--node[above]{$B$}(-1.618,0)--node[above]{$A$}(-.618,0)--node[above]{$B$}(0,0)--node[above]{$A$}(1,0)--node[above]{$A$}(2,0)--node[above]{$B$}(2.618,0)--node[above]{$A$}(3.618,0)--node[above]{$A$}(4.618,0)--node[above]{$B$}(5.236,0)--node[above]{$A$}(6.236,0)--node[above]{$B$}(6.854,0);
\end{tikzpicture}
\caption{The $(-\beta)$-integers in $[-\beta^3,\beta^4]$, $\beta = (1+\sqrt5)/2$.}
\label{f:intgm}
\end{figure}
\end{example}

\section{Delone property} \label{sec:delone-property}
A~set $S \subset \mathbb{R}$ is called \emph{Delone set} (or Delaunay set) if it is \emph{uniformly discrete} and \emph{relatively dense}; i.e., if there are numbers $R>r>0$, such that each interval of length~$r$ contains at most one point of~$S$, and every interval of length~$R$ contains at least one point of~$S$. 

If $\beta$ is an Yrrap number, then the set of distances between consecutive $(-\beta)$-integers is finite by \cite{Ambroz-Dombek-Masakova-Pelantova,Steiner}, thus $\mathbb{Z}_{-\beta}$ is a Delone set. 
For general~$\beta$, we show in this section that $\mathbb{Z}_{-\beta}$ need neither be uniformly discrete nor relatively dense.

\subsection{Uniform discreteness}
Since every point $x \in Y_\beta$ is separated from $Y_\beta \setminus \{x\}$ by an interval around~$x$, $Y_\beta$~is discrete, and the same holds for~$\mathbb{Z}_{-\beta}$.
It is well known that $\mathbb{Z}_\beta$ is uniformly discrete if and only if $0$ is not an accumulation point of $\{T_\beta^n(1) \mid n \ge 0\}$ (where $T_\beta(1) = \beta-\lfloor\beta\rfloor$).
For $(-\beta)$-integers, the situation is more complicated.

\begin{proposition} \label{p:ud}
Let $\beta > 1$.
If $0$ is not an accumulation point of $\big\{T_{-\beta}^{2n-1}\big(\frac{-\beta}{\beta+1}\big) > 0 \mid n \ge 1\big\}$, then the set $\mathbb{Z}_{-\beta}$ is uniformly discrete.
\end{proposition}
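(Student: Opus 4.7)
The plan is to bound every gap $g = y_{k'} - y_k$ between consecutive elements of $\mathbb{Z}_{-\beta}$ from below by a positive constant. Fix $\epsilon > 0$ such that $t_{2n-1} > \epsilon$ whenever $n \ge 1$ and $t_{2n-1} > 0$; such an $\epsilon$ exists by hypothesis. Let $u_{k-1} = (i', j)$ and $u_k = (i, j')$ be the letters adjacent to $y_k$, and similarly at $y_{k'}$. By Theorem~\ref{t:Zu}, at each $y \in \mathbb{Z}_{-\beta}$ either $t_{2i} = 0$ (``case~I'') or $t_{2j-1} = 0$ (``case~II'') holds.

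First I would dispatch the easy configurations. If case~I holds at $y_k$, then $u_k$ is forced to be a positive letter with $j' \in \mathbb{N}$ and $t_{2j'-1} > 0$, so $L_\beta(u_k) = t_{2j'-1} > \epsilon$ by hypothesis and hence $g \ge L_\beta(u_k) > \epsilon$. If case~II holds at $y_{k'}$, I apply Lemma~\ref{l:iotaT}: the right-hand limit of $\iota_\beta$ at $-\beta y_{k'}$ equals $T_{-\beta}(0) = 0$, so case~I holds at $-\beta y_{k'} \in \mathbb{Z}_{-\beta}$. The previous argument then bounds the gap to the right of $-\beta y_{k'}$ by $\epsilon$; since $-\beta y_k \in \mathbb{Z}_{-\beta}$ sits to the right of $-\beta y_{k'}$ at distance exactly $\beta g$, we obtain $\beta g > \epsilon$, i.e.\ $g > \epsilon/\beta$.

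The residual configuration, in which $y_k$ is in case~II only (so $u_{k-1}$ ends at $0$ but $u_k$ does not start at $0$) and $y_{k'}$ in case~I only, is the main obstacle. My plan is to inspect the six subcases defining $\psi_\beta$ and to note that inside any single $\psi_\beta$-image a letter $(\cdot, \infty)$ is always immediately followed by $(\infty, \cdot)$, and that no $\psi_\beta$-image ends with $(\cdot, \infty)$. Consequently the residual configuration can only arise at a boundary between two $\psi_\beta^2$-blocks, and a close look at such a boundary forces either $(0, \infty)$ or $(\infty, 0)$ to appear immediately after $u_k$ inside the factor, contributing at least $1/(\beta+1)$ to the total length. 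Combining with the previous bounds yields a uniform lower bound $r = \min\{\epsilon/\beta,\, 1/(\beta+1)\} > 0$. The hard part is verifying this last structural claim, since it requires a careful enumeration of the letters that actually occur in the fixed point of $\psi_\beta$ and of how consecutive $\psi_\beta^2$-blocks meet.
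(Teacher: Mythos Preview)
Your structural claim in the residual case is false: $\psi_\beta\big((\infty,0)\big) = (0,\infty)$ is always a single-letter image (since $a_\infty = a_0 = 0$ and $t_\infty t_0 = 0$), so there \emph{is} a $\psi_\beta$-image ending with $(\cdot,\infty)$, and it occurs in the fixed point as $\psi_\beta(u_0)$. Any enumeration of how consecutive $\psi_\beta^2$-blocks meet therefore has to cope with this, and your sketch does not explain how.

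More importantly, the residual configuration is a red herring. If at some $y_k \in \mathbb{Z}_{-\beta}$ case~II holds but case~I fails, then by the proof of Theorem~\ref{t:Zu} you are necessarily in the branch $\iota_\beta\big((-\beta)^{-m}y_k\big)=t_0$, where $u_{k-1}=(i',\lceil m/2\rceil)$ and $u_k=(\lfloor m/2\rfloor,j')$ with $m$ \emph{finite}; the condition $t_{2j-1}=0$ with $j=\lceil m/2\rceil$ then says $t_n=0$ for some finite~$n$. But $t_n=0$ forces $t_\ell=0$ for all $\ell\ge n$, so the orbit of $\tfrac{-\beta}{\beta+1}$ is finite and $\beta$ is an Yrrap number. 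In that case the (identified) alphabet is finite and uniform discreteness is immediate. This is exactly the split the paper uses: if $\beta$ is Yrrap one is done by finiteness; otherwise $t_n\ne 0$ for every finite $n$, so every $y_k\in\mathbb{Z}_{-\beta}$ is a continuity point of $\iota_\beta$, hence $u_k=(\infty,j')$ and $L_\beta(u_k)=t_{2j'-1}\ge\inf\{t_{2n-1}>0\mid n\ge 0\}>0$. Your case~I argument already contains this last step; what you were missing is that the residual configuration simply cannot occur once you assume $\beta$ is not Yrrap, so no block-boundary analysis of $\psi_\beta$ is needed.
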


\begin{proof}
When $\beta$ is an Yrrap number, then $\{L_\beta(u_k) \mid k \in \mathbb{Z}\}$ is finite, thus $\mathbb{Z}_{-\beta}$ is uniformly discrete.
Therefore, assume that $\beta$ is not an Yrrap number, in particular that $t_n \ne 0$ for all (finite) $n \ge 0$, with the notation of Section~\ref{sec:fixed-point-an}.
Then $\iota_\beta$ is continuous at every point $y_k \in \mathbb{Z}_{-\beta}$, $k \in \mathbb{Z}$, thus $u_k = (i,j)$ with $t_{2i} = 0$. 
Since $L_\beta(u_k) \ge \inf\{t_{2n-1} > 0 \mid n \ge 0\}$, the set $\mathbb{Z}_{-\beta}$~is uniformly discrete if $\inf\{t_{2n-1} > 0 \mid n \ge 1\} > 0$.
\end{proof}

In order to give examples of~$\beta$ where $\mathbb{Z}_{-\beta}$ is not uniformly discrete or not relatively dense, we need to know which sequences are possible $(-\beta)$-expansions of $\frac{-\beta}{\beta+1}$.
The corresponding problem for $\beta$-expansions was solved by Parry \cite{Parry60}. 
G\'ora \cite[Theorem~25]{Gora07} gave an answer to a more general question, but his theorem is incorrect, as noticed in \cite{Dombek-Masakova-Pelantova}.
However, G\'ora proved the following result, where $\le_{\mathrm{alt}}$ denotes the alternate order on words, i.e., $x_1 x_2 \cdots <_{\mathrm{alt}} y_1 y_2 \cdots$ if $(-1)^n (x_{n+1}-y_{n+1}) < 0$, $x_1 \cdots x_n = y_1 \cdots y_n$ for some $n \ge 0$.

\begin{lemma} \label{l:gora}
Let $a_1 a_2 \cdots$ be a sequence of non-negative integers satisfying $a_{n+1} a_{n+2} \cdots \le_{\mathrm{alt}} a_1 a_2 \cdots$ for all $n \ge 1$, with $a_1 \ge 2$.
Then there exists a unique $a_1 \le \beta \le a_1+1$ such that 
\[
\sum_{k=1}^\infty \frac{a_k}{(-\beta)^k} = \frac{-\beta}{\beta+1}\,, \quad \mbox{and} \quad \frac{-\beta}{\beta+1} \le \sum_{k=1}^\infty \frac{a_{n+k}}{(-\beta)^k} \le \frac{1}{\beta+1} \quad \mbox{for all}\ n \ge 1\,.
\]
\end{lemma}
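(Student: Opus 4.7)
My plan is to realize $\beta$ as the solution of a fixed-point equation by the intermediate value theorem and then verify the shifted inequalities using the alt-order hypothesis. First, the condition $a_{n+1}a_{n+2}\cdots\le_{\mathrm{alt}} a_1 a_2\cdots$, applied at the first comparison position, forces $a_{n+1}\le a_1$ for every $n\ge 0$. Hence the series $f(\beta):=\sum_{k\ge 1}a_k/(-\beta)^k$ is dominated by $a_1\sum_k\beta^{-k}$ uniformly on $[a_1,a_1+1]$, so $\Phi(\beta):=f(\beta)+\beta/(\beta+1)$ is continuous on $[a_1,a_1+1]$.

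The key analytic tool is a comparison lemma: for sequences $x,y$ of non-negative integers bounded by $a_1$, if $x\le_{\mathrm{alt}} y$ then $\sum_k x_k/(-\beta)^k\ge\sum_k y_k/(-\beta)^k$. The lemma follows by inspecting the first disagreement index $n+1$: the sign of the leading contribution $(x_{n+1}-y_{n+1})/(-\beta)^{n+1}$ is positive because the definition of $\le_{\mathrm{alt}}$ couples the sign of $x_{n+1}-y_{n+1}$ with the parity of $n$, which matches the sign of $(-\beta)^{-(n+1)}$, and the tail is controlled by a telescoping argument exploiting the shift-admissibility of $y$, in the spirit of Parry's classical proof for positive bases.

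Granted the lemma, the shifted inequalities follow at once. Applying it with $y=a$ and $x$ the shift $a_{n+1}a_{n+2}\cdots$ gives the lower bound $\sum_k a_{n+k}/(-\beta)^k\ge -\beta/(\beta+1)$, while the upper bound follows from the identity
\[
\sum_{k\ge 1}\frac{a_{n+k}}{(-\beta)^k} = -\frac{a_{n+1}}{\beta}-\frac{1}{\beta}\sum_{k\ge 1}\frac{a_{n+1+k}}{(-\beta)^k}
\]
together with $a_{n+1}\ge 0$ and the lower bound applied to $a_{n+2}a_{n+3}\cdots$. For existence of~$\beta$, I would use the comparison lemma against admissible perturbations of $a$ at the endpoints to show $\Phi(a_1)\le 0\le\Phi(a_1+1)$, whence the intermediate value theorem produces a $\beta\in[a_1,a_1+1]$ with $\Phi(\beta)=0$. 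For uniqueness, two candidates $\beta,\beta'\in[a_1,a_1+1]$ sharing the sequence $a_1a_2\cdots$ as the $(-\beta_i)$-expansion of their respective left endpoints $-\beta_i/(\beta_i+1)$ would contradict the strict, continuous dependence of the initial expansion digits on the base.

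The principal obstacle is the tail estimate in the comparison lemma: on $[a_1,a_1+1]$ the inequality $\beta-1\le a_1$ defeats the naive geometric bound $a_1\sum_{k>n+1}\beta^{-k}$, so the argument must instead exploit the iterated admissibility of $a_1a_2\cdots$ (and here the hypothesis $a_1\ge 2$ enters essentially, guaranteeing $\beta\ge 2$). Carrying out this telescoping carefully, together with pinning down the edge-case signs at $\beta=a_1$ and $\beta=a_1+1$ where the inequality may degenerate to equality, constitutes the technical heart of the proof.
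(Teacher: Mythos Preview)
The paper does not prove this lemma directly: its entire argument is a one-line citation, observing that the generalised $\beta$-transformations of G\'ora with $E=(1,\dots,1)$ coincide (up to a change of variables explained in \cite{Liao-Steiner}) with $T_{-\beta}$, so the statement follows from the proof of Theorem~25 in \cite{Gora07}. Your proposal therefore cannot match the paper's route; it is an attempt at a self-contained proof.

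As such an attempt, it has a genuine gap. Your comparison lemma, stated for \emph{arbitrary} sequences $x,y$ with entries in $\{0,\dots,a_1\}$, is false on the relevant range of $\beta$. For $a_1=2$ and $\beta=2$, take $x=0\,0\,2\,0\,2\,0\cdots$ and $y=1\,2\,0\,2\,0\,2\cdots$: then $x<_{\mathrm{alt}} y$ (they differ at the first position with $x_1<y_1$), but
\[
\sum_{k\ge1}\frac{x_k}{(-2)^k}=-\tfrac{1}{3}<\tfrac{1}{6}=\sum_{k\ge1}\frac{y_k}{(-2)^k}\,.
\]
What \emph{is} true, and what Parry/G\'ora actually prove, is the special case where $y=a_1a_2\cdots$ itself satisfies the self-admissibility hypothesis and $x$ is one of its shifts; the tail control then comes from repeatedly invoking admissibility at later positions, not from a blanket bound on $\{0,\dots,a_1\}$-sequences. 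You gesture towards this (``iterated admissibility''), but you do not carry it out, and since the general lemma you stated is false, the reduction of the shifted inequalities to it does not stand as written.

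Two further points. Your uniqueness argument assumes $a_1a_2\cdots$ is the actual $(-\beta)$-expansion of $-\beta/(\beta+1)$; the lemma allows equality on the right, and the paper explicitly notes after the statement that $\sum_k a_{n+k}/(-\beta)^k=\tfrac{1}{\beta+1}$ can occur (e.g.\ $a=2(10)^\omega$, $\beta=2$), so one cannot appeal to ``expansion digits'' here. The clean route is monotonicity of $\Phi$ on $[a_1,a_1+1]$. And the endpoint signs $\Phi(a_1)\le 0\le\Phi(a_1+1)$ are asserted but not verified; these also require more than the naive geometric bound when $\beta=a_1$.
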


\begin{proof}
As the generalised $\beta$-transformations in \cite{Gora07} with $E = (1,\ldots,1)$ are intimately related to our~$T_{-\beta}$ (see the introduction of \cite{Liao-Steiner}), the lemma follows from the proof of Theorem~25 in \cite{Gora07}.
\end{proof}

The flaw in \cite[Theorem~25]{Gora07} is that $\sum_{k=1}^\infty \frac{a_{n+k}}{(-\beta)^k} = \frac{1}{\beta+1}$ cannot be excluded, which would be necessary for $a_1 a_2 \cdots$ to be the $(-\beta)$-expansion of~$\frac{-\beta}{\beta+1}$.
Indeed, the sequence $a_1 a_2 \cdots = 2 (10)^\omega$ satisfies the conditions of Lemma~\ref{l:gora} and yields $\beta = 2$, but $\sum_{k=1}^\infty \frac{a_{k+2}}{(-2)^k} = \frac{1}{3}$; the $(-2)$-expansion of $\frac{-2}{3}$ is $2^\omega$.
In order to avoid this problem, we define a relation $<'_{\mathrm{alt}}$ by $x_1 x_2 \cdots <'_{\mathrm{alt}} y_1 y_2 \cdots$ if $(-1)^n (x_{n+1}-y_{n+1}) < -1$, $x_1 \cdots x_n = y_1 \cdots y_n$ for some $n \ge 0$.

\begin{lemma} \label{l:exp}
Let $a_1 a_2 \cdots$ and $\beta$ be as in Lemma~\ref{l:gora}.
If $a_{n+1} a_{n+2} \cdots <'_{\mathrm{alt}} a_1 a_2 \cdots$ for all $n \ge 2$ such that $a_n = 0$, then $a_1 a_2 \cdots$ is the $(-\beta)$-expansion of~$\frac{-\beta}{\beta+1}$.
\end{lemma}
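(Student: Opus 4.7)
The plan is to argue by contradiction that no tail sum $\sum_{k=1}^\infty a_{n+k}/(-\beta)^k$ can equal the forbidden right-endpoint value $\tfrac{1}{\beta+1}$: combined with Lemma~\ref{l:gora}, this forces every tail into the half-open interval $\bigl[\tfrac{-\beta}{\beta+1},\tfrac{1}{\beta+1}\bigr)$, which is exactly the domain of $T_{-\beta}$ and hence the defining property of the $(-\beta)$-expansion of $\tfrac{-\beta}{\beta+1}$.

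First, suppose for contradiction that $\sum_{k=1}^\infty a_{n+k}/(-\beta)^k = \tfrac{1}{\beta+1}$ for some $n \ge 1$. Isolating the first term and multiplying by $-\beta$ yields
\[
\sum_{k=1}^\infty \frac{a_{n+1+k}}{(-\beta)^k} \;=\; -\frac{\beta}{\beta+1} - a_{n+1}.
\]
Since the left-hand side must again lie in $\bigl[\tfrac{-\beta}{\beta+1},\tfrac{1}{\beta+1}\bigr]$ by Lemma~\ref{l:gora} and $a_{n+1}$ is a non-negative integer, this forces $a_{n+1}=0$ and $\sum_{k=1}^\infty a_{n+1+k}/(-\beta)^k = \tfrac{-\beta}{\beta+1}$. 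In particular, the shifted sequence $a_{n+2}a_{n+3}\cdots$ represents the same real number as $a_1a_2\cdots$.

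Next, since $a_{n+1}=0$ with $n+1 \ge 2$, the hypothesis of the lemma gives $a_{n+2}a_{n+3}\cdots <'_{\mathrm{alt}} a_1a_2\cdots$, so there exists $m \ge 0$ with $a_{n+2}\cdots a_{n+1+m} = a_1\cdots a_m$ and $(-1)^m(a_{n+2+m}-a_{m+1}) \le -2$. Equating the two representations of $\tfrac{-\beta}{\beta+1}$, cancelling the matching prefix, and isolating the term of index $m+1$, I obtain
\[
a_{m+1} - a_{n+2+m} \;=\; \sum_{j=1}^\infty \frac{a_{n+2+m+j}}{(-\beta)^j} \;-\; \sum_{j=1}^\infty \frac{a_{m+1+j}}{(-\beta)^j}.
\]
By Lemma~\ref{l:gora} each of the two sums on the right lies in $\bigl[\tfrac{-\beta}{\beta+1},\tfrac{1}{\beta+1}\bigr]$, an interval of length $1$, so the right-hand side lies in $[-1,1]$. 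The left-hand side is an integer, hence $|a_{m+1}-a_{n+2+m}| \le 1$, contradicting $|a_{n+2+m}-a_{m+1}| \ge 2$.

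The main subtle point, and the reason the strengthened relation $<'_{\mathrm{alt}}$ with gap~$\ge 2$ is needed in place of the plain $<_{\mathrm{alt}}$ with gap~$\ge 1$, is precisely this final integer bound: the tail interval has length exactly~$1$, so a digit gap of $\pm 1$ is perfectly compatible with Lemma~\ref{l:gora} and only a gap of $\ge 2$ can be ruled out by this kind of comparison. This is exactly what is needed to repair the flaw in \cite[Theorem~25]{Gora07}.
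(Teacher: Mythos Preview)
Your proof is correct and takes a genuinely different route from the paper's. The paper argues directly: for each $n\ge 2$ it shows the strict inequality $\sum_{k\ge1}a_{n-1+k}/(-\beta)^k<\tfrac{1}{\beta+1}$ by a case split on whether $a_n>0$ (where the first term already forces the sum below~$0$) or $a_n=0$ (where the $<'_{\mathrm{alt}}$ hypothesis is fed into an explicit chain of inequalities to bound the next tail strictly above $\tfrac{-\beta}{\beta+1}$, and then one divides by $-\beta$). Your contradiction argument is cleaner in two respects: the assumption of equality at $\tfrac{1}{\beta+1}$ automatically forces the next digit to be~$0$, so no case split is needed; and the contradiction comes from the soft observation that two tails lying in an interval of length~$1$ cannot differ by an integer of absolute value $\ge 2$, rather than from a hands-on inequality chain. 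The paper's approach, on the other hand, yields a quantitative lower bound on $\tfrac{1}{\beta+1}-\sum_{k\ge1}a_{n-1+k}/(-\beta)^k$, which your approach does not.
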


\begin{proof}
We have to show that $\sum_{k=1}^\infty \frac{a_{n-1+k}}{(-\beta)^k} < \frac{1}{\beta+1}$ for all $n \ge 2$.
If $a_n > 0$, then this inequality follows from $\sum_{k=1}^\infty \frac{a_{n-1+k}}{(-\beta)^k} \le \frac{-1}{\beta} + \frac{1}{\beta+1} < 0$.
If $a_n = 0$, then we have some $m \ge 0$ such that $(-1)^m (a_{n+m+1}-a_{m+1}) \le -2$, $a_{n+1} \cdots a_{n+m} = a_1 \cdots a_m$.
This implies that 
\begin{multline*}
\sum_{k=1}^\infty \frac{a_{n+k}}{(-\beta)^k} = \sum_{k=1}^m \frac{a_k}{(-\beta)^k} + \frac{1}{(-\beta)^m} \sum_{k=1}^\infty \frac{a_{n+m+k}}{(-\beta)^k} \\
\ge \sum_{k=1}^m \frac{a_k}{(-\beta)^k} + \frac{a_{m+1}}{(-\beta)^{m+1}} + \frac{2-\frac{\beta}{\beta+1}}{\beta^{m+1}} > \sum_{k=1}^{m+1} \frac{a_k}{(-\beta)^k} + \frac{\frac{\beta}{\beta+1}}{\beta^{m+1}} \ge \sum_{k=1}^\infty \frac{a_k}{(-\beta)^k} = \frac{-\beta}{\beta+1}\,,
\end{multline*}
thus $\sum_{k=1}^\infty \frac{a_{n-1+k}}{(-\beta)^k} = \frac{1}{-\beta} \sum_{k=1}^\infty \frac{a_{n+k}}{(-\beta)^k} < \frac{1}{\beta+1}$.
\end{proof}

\begin{proposition}
Let $a_1 a_2 \cdots = 3 0 1 0^3 1 0^5 1 \cdots$. 
Then $a_1 a_2 \cdots$ is the $(-\beta)$-expansion of~$\frac{-\beta}{\beta+1}$ for some $\beta > 1$, and $\mathbb{Z}_{-\beta}$ is not uniformly discrete.
\end{proposition}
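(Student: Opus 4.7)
The statement has two separable parts: producing the real number~$\beta$ with the given expansion, then showing non-uniform discreteness of~$\mathbb{Z}_{-\beta}$. For the first part I would simply verify the hypotheses of Lemmas~\ref{l:gora} and~\ref{l:exp}. Since $a_1=3$ is the only digit exceeding~$1$ in the sequence, any shift $a_{n+1}a_{n+2}\cdots$ with $n\ge 1$ already differs from $a_1a_2\cdots$ at position~$1$, where $a_{n+1}\le 1<3=a_1$; this gives $a_{n+1}a_{n+2}\cdots<_{\mathrm{alt}}a_1a_2\cdots$, so Lemma~\ref{l:gora} yields a unique $\beta\in[3,4]$. Moreover $a_{n+1}\le 1 = a_1-2$, so $a_{n+1}a_{n+2}\cdots<'_{\mathrm{alt}}a_1a_2\cdots$ whenever $n\ge 2$ and $a_n=0$, and Lemma~\ref{l:exp} identifies $a_1 a_2 \cdots$ with the $(-\beta)$-expansion of~$\tfrac{-\beta}{\beta+1}$. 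In particular $\beta\ge\tfrac{1+\sqrt5}{2}$, so the machinery of Section~\ref{sec:fixed-point-an} applies.

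Next, write $n_k=k^2+k+1$ for the position of the $k$-th letter~$1$, so that $n_{k+m}-n_k=m(2k+m+1)$ for every $m\ge 1$; set $j_k=(n_k+1)/2$. Since $m(2k+m+1)$ is always even, summing the two tails directly gives
\[
t_{n_k}=\sum_{m\ge 1}\beta^{-m(2k+m+1)}>0,\qquad -t_{n_k+1}=\sum_{m\ge 1}\beta^{-m(2k+m+1)+1}>0,
\]
both of which tend to~$0$ as $k\to\infty$. The crux of the proof is the claim that for every $k\ge 1$ the two-letter word $(\infty,j_k)(j_k,\infty)$ occurs as consecutive letters in the fixed point $\cdots u_{-1}u_0u_1\cdots$ of~$\psi_\beta$. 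Once this is granted, Theorem~\ref{t:Zu} applied at such positions $u_\ell u_{\ell+1}=(\infty,j_k)(j_k,\infty)$ shows $y_\ell,y_{\ell+2}\in\mathbb{Z}_{-\beta}$ (the letters respectively start and end with~$\infty$) but $y_{\ell+1}\notin\mathbb{Z}_{-\beta}$ (no~$\infty$ appears between them), so $y_\ell,y_{\ell+2}$ are consecutive $(-\beta)$-integers at distance $L_\beta((\infty,j_k))+L_\beta((j_k,\infty))=t_{n_k}-t_{n_k+1}\to 0$, and $\mathbb{Z}_{-\beta}$ fails to be uniformly discrete.

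To establish the claim I would iterate $\psi_\beta^2$. Since every $n_\ell$ is odd, $a_{2m}=0$ for every $m\ge 1$, so the ``diagonal'' case $a_{2i+1}=a_{2j}$, $t_{2i+1}t_{2j}\ge 0$ of $\psi_\beta$ always sends $(\infty,m)\mapsto(m,\infty)$; a short analysis of the seven cases of $\psi_\beta$ then yields
\[
\psi_\beta^2\bigl((\infty,m)(m,\infty)\bigr)=\begin{cases}(\infty,m+1)(m+1,\infty)&\text{if }2m+1\ne n_\ell\text{ for all }\ell,\\(\infty,0)(0,\infty)(\infty,j_\ell)(j_\ell,\infty)\,w_\ell&\text{if }2m+1=n_\ell,\end{cases}
\]
for an explicit tail $w_\ell$. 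The base pair $(\infty,1)(1,\infty)$ is visible at once, since $\psi_\beta^2((\infty,0))=\psi_\beta((0,\infty))$ ends in $(\infty,1)$ while $\psi_\beta^2((0,\infty))$ begins with $(1,\infty)$. Iterating $\psi_\beta^2$ then either increments $m$ by~$1$ or, at each special value $m=j_\ell-1\in\{1,3,6,10,\ldots\}$, produces the fresh pair $(\infty,j_\ell)(j_\ell,\infty)$, so every such pair does appear. The main obstacle is precisely this last step: one has to run through the seven cases of $\psi_\beta$ and verify that $\psi_\beta^2$ genuinely preserves the adjacency $(\infty,j_k)(j_k,\infty)$ along the iteration; the admissibility check and the asymptotics of $t_{n_k}$, $t_{n_k+1}$ are routine by comparison.
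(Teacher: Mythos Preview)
Your proposal is correct and follows essentially the same line as the paper: both arguments exhibit the factors $(\infty,n)(n,\infty)$ in the fixed point of~$\psi_\beta$ by induction and then observe that their $L_\beta$-length $t_{2n-1}-t_{2n}$ can be made arbitrarily small. The paper's induction is marginally slimmer---it tracks only the last letter of $\psi_\beta^{2n}\big((\infty,0)\big)$ rather than iterating $\psi_\beta^2$ on a two-letter block---and your case split in the displayed formula for $\psi_\beta^2$ is in fact unnecessary (both branches contain $(\infty,m+1)(m+1,\infty)$, since $j_\ell=m+1$ in the second), but these are cosmetic differences.
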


\begin{proof}
By Lemmas~\ref{l:gora} and~\ref{l:exp}, there exists a $\beta > 1$ such that $a_1 a_2 \cdots$ is the $(-\beta)$-expansion of~$\frac{-\beta}{\beta+1}$. 
Since $a_{2n} a_{2n+1} \cdots$ starts with an odd number of zeros for all $n \ge 1$, we have $t_{2n-1} > 0$ for all $n \ge 0$.
Therefore, induction on~$n$ yields that $\psi_\beta^{2n}\big((\infty,0)\big)$ ends with $(\infty,n)$, and $\psi_\beta^{2n+1}\big((\infty,0)\big)$ starts with $(n,\infty)$ for all $n \ge 0$.
This implies that $\psi_\beta^{2n+1}\big((\infty,0)\, (0,\infty)\big)$ contains the factor $(\infty,n)\, (n,\infty)$ for all $n \ge 0$, and $L_\beta\big((\infty,n)\, (n,\infty)\big) = t_{2n-1} - t_{2n}$ is a distance between consecutive $(-\beta)$-integers.
For any $k \ge 1$, we have $a_{k(k-1)+2} \cdots a_{k(k+1)+1} = 0^{2k-1}1$, thus $0< t_{k(k-1)+1} < \beta^{-2k} \frac{\beta^2}{\beta+1}$ and $-\beta^{-2k} \frac{\beta^3}{\beta+1} < t_{k(k-1)+2} < 0$.
Therefore, the distance between consecutive elements of $\mathbb{Z}_{-\beta}$ can be arbitrarily small. 
\end{proof}

The following proposition shows that the converse of Proposition~\ref{p:ud} is not true.

\begin{proposition}
Let $a_1 a_2 \cdots = 30032\, 00\, 00\, 30032\, 2\, 00\, 00\, 00\, 00\, 30032\, 00\, 00\, 30032\, 2\, 2 \cdots$ be a fixed point of the morphism
\[
\sigma_1:\ 3 \mapsto 30032\,,\ 2 \mapsto 2\,,\ 0 \mapsto 00\,.
\]
Then $a_1 a_2 \cdots$ is the $(-\beta)$-expansion of~$\frac{-\beta}{\beta+1}$ for a $\beta > 1$, $\inf\big\{T_{-\beta}^{2n-1}\big(\frac{-\beta}{\beta+1}\big) > 0 \mid n \ge 1\big\} = 0$ and $\mathbb{Z}_{-\beta}$ is uniformly discrete.
\end{proposition}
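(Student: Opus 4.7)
The plan is to verify the three assertions in sequence, all driven by the substitutive identity $\sigma_1^{k+1}(3) = \sigma_1^k(3)\,0^{2^{k+1}}\,\sigma_1^k(3)\,2$. For \emph{admissibility} as a $(-\beta)$-expansion I apply Lemma~\ref{l:gora} and then Lemma~\ref{l:exp}. The alphabet is $\{0,2,3\}$ with $a_1=3\ge 2$. Shifts beginning with $0$ or $2$ are trivially less than $a$ at position~$1$; shifts beginning with a $3$ at position~$4$ of a $30032$-block start with $32$ and give $<_{\mathrm{alt}}$ at position~$2$ (since $2>0=a_2$); shifts beginning with a $3$ at position~$1$ of such a block coincide with $a_1\cdots a_5$, and the comparison continues at position~$6$ via induction on $n$, transporting the inductive inequality through the substitution $\sigma_1$ by a three-way case on the next pre-image letter. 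The strict condition $<'_{\mathrm{alt}}$ of Lemma~\ref{l:exp} for $a_n=0$ is automatic because non-equal letters in $\{0,2,3\}$ differ by at least~$2$.

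For the \emph{accumulation at zero}, set $L_k=|\sigma_1^k(3)|$; the identity above gives $L_{k+1}=2L_k+2^{k+1}+1$, hence $L_k$ is odd for every $k\ge 0$. The internal zero block of length~$2^k$ inside $\sigma_1^k(3)$ (located at positions $[L_{k-1}+1,L_{k-1}+2^k]$) recurs inside the \emph{second} copy of $\sigma_1^k(3)$ in $\sigma_1^{k+1}(3)$ at positions $[P_k,P_k+2^k-1]$ with $P_k:=L_k+2^{k+1}+L_{k-1}+1$. The parity $\mathrm{odd}+\mathrm{even}+\mathrm{odd}+1=\mathrm{odd}$ shows $P_k$ is odd, and the next non-zero letter (equal to~$3$) appears at the odd position $P_k+2^k$. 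Taking $n=P_k$ gives $t_n = 3/\beta^{2^k}+O(\beta^{-2^k-1})>0$, which tends to $0$ as $k\to\infty$; writing $n=2m-1$ yields $\inf\{t_{2m-1}>0\mid m\ge 1\}=0$.

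For \emph{uniform discreteness}, since $a$ is not eventually periodic, $\beta$ is not an Yrrap number and $t_n\neq 0$ for every finite~$n$. The argument in the proof of Proposition~\ref{p:ud} then forces $u_k=(\infty,j_k)$ whenever $y_k\in\mathbb{Z}_{-\beta}$, with $L_\beta(u_k)=t_{2j_k-1}$. The key claim is that the only letter of the form $(\infty,j)$ appearing in the $\psi_\beta$-fixed point is $(\infty,0)$, of length $L_\beta((\infty,0))=1/(\beta+1)$. Among the six clauses defining $\psi_\beta$, only the clause in effect when $t_{2i+1}>0,\ t_{2j}<0$ produces a first-coordinate-$\infty$ letter other than $(\infty,0)$ (namely $(\infty,i+1)$); all five other clauses yield $(\infty,0)$ only. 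Starting from $\psi_\beta((\infty,0))=(0,\infty)$ (Case~1, since $a_\infty=a_0=0$) and $\psi_\beta((0,\infty))=(\infty,0)(0,\infty)(\infty,0)(0,\infty)(\infty,0)(0,1)$, I would prove by induction on iterates of $\psi_\beta$ that no letter $(i,j)$ in the fixed point realises the sign pattern $t_{2i+1}>0,\ t_{2j}<0$, so the dangerous clause is never invoked and consecutive $(-\beta)$-integers are at distance at least $1/(\beta+1)>0$.

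The main obstacle is the sign-pattern invariant in the third step: verifying that along every $\psi_\beta$-orbit launched from $(\infty,0)$ and $(0,\infty)$, no letter $(i,j)$ has $t_{2i+1}$ positive and $t_{2j}$ negative simultaneously. This is a delicate claim about the signs of $t_n$ at precisely those indices that arise as coordinates of fixed-point letters, and must be matched against the parity with which zero blocks of~$a$ are embedded inside the iterates $\sigma_1^k(3)$. Once this is in place, the uniform lower bound $1/(\beta+1)$ follows despite the accumulation phenomenon established in the second step.
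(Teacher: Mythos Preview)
Your treatment of admissibility and of the accumulation $\inf\{t_{2n-1}>0\}=0$ is fine and matches the paper's (your index $P_k$ equals the paper's $n_k=|\sigma_1^k(300)|+|\sigma_1^{k-1}(3)|+1$).

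The uniform–discreteness step, however, contains a factual error in reading $\psi_\beta$. You assert that only the clause with $t_{2i+1}>0,\ t_{2j}<0$ produces a letter $(\infty,i{+}1)$ with $i{+}1\ne 0$. This is false: the clause with $t_{2i+1}>0,\ t_{2j}\ge 0$ is
\[
(j,0)\,\big((0,\infty)(\infty,0)\big)^{a_{2i+1}-a_{2j}-1}\,(0,\infty)\,(\infty,i{+}1),
\]
which also ends in $(\infty,i{+}1)$. Hence your proposed invariant ``no fixed-point letter $(i,j)$ has $t_{2i+1}>0$ and $t_{2j}<0$'' is \emph{not sufficient} to exclude the creation of $(\infty,i{+}1)$, even if it were true. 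The natural strengthening ``no fixed-point letter has $t_{2i+1}>0$'' is false: tracing the orbit one finds
\[
(\infty,0)\to(0,\infty)\to\cdots(0,1)\to\cdots(1,0)\to(0,2)\to(2,1)\to\cdots(0,3)\to\cdots(3,0)\to(0,4)\to\cdots(4,0)\to(0,5)\to(5,1),
\]
and $(5,1)$ has $t_{2i+1}=t_{11}>0$ (since $a_{12}a_{13}\cdots=0\,3\cdots$). Here the simple clause applies (because $a_{11}=a_2=0$ and $t_{11}t_2>0$), so no $(\infty,6)$ is produced at this particular step, but your invariant cannot see this; one would need a much finer statement linking the \emph{pair} $(a_{2i+1},a_{2j})$ and the signs of $t_{2i+1},t_{2j}$ simultaneously, and you have not formulated or proved such a statement.

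The paper avoids this altogether. Instead of a forward sign-pattern invariant on $\psi_\beta$, it fixes $n\ge 1$ with $t_{2n-1}>0$, reads off from the substitutive structure that $a_1\cdots a_{2n-1}$ ends with $\sigma_1^k(3)\,0^{2j-1}$, sets $2m=|\sigma_1^k(3)|+2j-1$, and observes that any fixed-point letter of the form $(i,n)$ is produced inside $\psi_\beta^{2m}\big((i',n{-}m)\big)$. The coincidence $a_1\cdots a_{2m}=a_{2n-2m}\cdots a_{2n-1}$ then forces $t_{2m}\le t_{2i}<t_{2n-1}$, and a direct check gives $t_{2m}>0$ (the suffix $a_{2m+1}a_{2m+2}\cdots$ starts with $0^{2^{k+1}-2j+1}3$, an odd run of zeros). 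Thus $t_{2i}>0$, so $i\ne\infty$, and every $y_k\in\mathbb{Z}_{-\beta}$ has $u_k=(\infty,0)$. The argument is a backward bound on $t_{2i}$, not a forward case analysis of the $\psi_\beta$-clauses; this is the missing idea in your plan.
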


\begin{proof}
When $a_{n+1} = 3$, $n \ge 1$, then $a_{n+1} a_{n+2} \cdots$ starts with $\sigma_1^k(3)\, 2$ for some $k \ge 0$.
Since $|\sigma_1^k(3)|$ is odd for all $k \ge 0$, we have $\sigma_1^k(3)\, 2 <'_{\mathrm{alt}} \sigma_1^k(3)\, 0$, thus $a_{n+1} a_{n+2} \cdots <'_{\mathrm{alt}} a_1 a_2 \cdots$.
Since $a_n =0$ implies $a_{n+1} \in \{0,3\}$, the conditions of Lemmas~\ref{l:gora} and~\ref{l:exp} are satisfied, i.e., there exists a $\beta > 1$ such that $a_1 a_2 \cdots$ is the $(-\beta)$-expansion of~$\frac{-\beta}{\beta+1}$. 

Let $n_k =  |\sigma_1^k(300)| + |\sigma_1^{k-1}(3)|+1$, $k \ge 1$.
Then $a_{n_k+1} \cdots a_{n_k+2^k} = 0^{2^k-1}\,3$, thus we have $t_{n_k} > 0$ and $\lim_{k\to\infty} t_{n_k} = 0$.
Since $n_k$ is odd, this yields that $\inf\{t_{2n-1} > 0 \mid n \ge 1\} = 0$.

Let $n \ge 1$ with $t_{2n-1} > 0$.
Then $a_{2n} a_{2n+1} \cdots$ starts with an odd number of zeros, thus $a_1 \cdots a_{2n-1}$ ends with $\sigma_1^k(3)\, 0^{2j-1}$ for some $k \ge 0$, $1 \le j \le 2^{k-1}$.
Let $2m = |\sigma_1^k(3)| +  2j - 1$.
Recall that $\cdots u_{-1} u_0 u_1 \cdots$ is a fixed point of~$\psi_\beta$.
Any letter $u_k = (i,n)$, $i \in \mathbb{N}_\infty$, occurs only in $\psi_\beta^{2m}\big((i',n-m)\big)$, $i' \in \mathbb{N}_\infty$ with $t_0 \le t_{2i'} < t_{2n-2m-1}$.
Since $a_1 \cdots a_{2m} = a_{2n-2m} \cdots a_{2n-1}$, we obtain that $t_{2m} \le t_{2i} < t_{2n-1}$.
Moreover, $a_{2m+1} a_{2m+2} \cdots$ starts with $0^{2^k-2j+1}\, 3$, thus $t_{2m} > 0$.
Now, the continuity of $\iota_\beta$ at every point in~$\mathbb{Z}_{-\beta}$ yields that $y_k \not\in \mathbb{Z}_{-\beta}$ if $u_k = (i,n)$, $n \ge 1$.
Therefore, $y_k \in \mathbb{Z}_{-\beta}$ implies that $u_k = (\infty,0)$, hence $\mathbb{Z}_{-\beta}$ is uniformly discrete.
\end{proof}

\subsection{Relative denseness}
Since the distance between consecutive $\beta$-integers is at most~$1$, $\mathbb{Z}_\beta$~is always relatively dense.
We show that this is not always true for~$\mathbb{Z}_{-\beta}$.

\begin{proposition}
Let $a_1 a_2 \cdots = 31232 1 2 31232 2 1 2 31232 1 2 31232 2 2 \cdots$ be a fixed point of the morphism
\[
\sigma_2:\ 3 \mapsto 31232\,,\ 2 \mapsto 2\,,\ 1 \mapsto 1\,.
\]
Then $a_1 a_2 \cdots$ is the $(-\beta)$-expansion of~$\frac{-\beta}{\beta+1}$ for a $\beta > 1$, and $\mathbb{Z}_{-\beta}$ is not relatively dense.
\end{proposition}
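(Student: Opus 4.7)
The plan breaks into two steps: (i) apply Lemmas~\ref{l:gora} and~\ref{l:exp} to obtain $\beta$, and (ii) show via the anti-morphism $\psi_\beta$ that gaps in $\mathbb{Z}_{-\beta}$ grow without bound.

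For step~(i), Lemma~\ref{l:exp} is vacuous because the digit $0$ does not appear in $a_1 a_2 \cdots \in \{1,2,3\}^{\mathbb{N}}$, so its hypothesis refers to an empty set of indices $n$. For Lemma~\ref{l:gora}, shifts $a_{n+1} a_{n+2} \cdots$ beginning with $1$ or $2$ are trivially alternate-smaller than $a_1 a_2\cdots = 3\cdots$, so I focus on shifts beginning with $3$. The key observation is the substitutive recursion $\sigma_2^{k+1}(3) = X_k \cdot 1 \cdot 2 \cdot X_k \cdot 2$, where $X_k := \sigma_2^k(3)$: every $3$ in the fixed point starts an occurrence of $X_k$ for some maximal $k \ge 0$, and at position $|X_k|+1$ of that shift one finds the trailing $2$ of the enclosing $X_{k+1}$, whereas the unshifted sequence has $a_{|X_k|+1} = 1$. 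Since $|X_k| = 4\cdot 2^k - 3$ is odd, the sign factor $(-1)^{|X_k|}$ makes the alternate-order comparison strict, giving the required inequality.

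For step~(ii), observe that $a_n \ge 1$ implies $t_n \le -1/(\beta(\beta+1)) < 0$ for every finite $n \ge 0$, so by Theorem~\ref{t:Zu} the boundary $y_k$ belongs to $\mathbb{Z}_{-\beta}$ exactly when the second coordinate of $u_{k-1}$ or the first coordinate of $u_k$ equals $\infty$. A gap between consecutive $(-\beta)$-integers is therefore a factor of the $\psi_\beta$-fixed point whose strict interior consists solely of \emph{interior} letters $(i,j)$ with $i,j \in \mathbb{N}$ (both finite). I would then iterate $\psi_\beta^2$ on the two-sided fixed point and follow the letters $\mathrm{last}_k$, $\mathrm{first}_k$ sitting, respectively, at the end of $\psi_\beta^{2k}(u_0)$ and at the start of $\psi_\beta^{2k}(u_1)$. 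An explicit case analysis using the definition of $\psi_\beta$ shows that these are typically interior letters of the form $(0,m_k)$ and $(m_k',n_k)$ with indices growing in $k$ (reflecting the infinite alphabet forced by $\beta$ being non-Yrrap). At suitable levels the interior chain around the junction $\mathrm{last}_k \cdot \mathrm{first}_k$ absorbs further predecessors from $\psi_\beta^2(\mathrm{last}_{k-1})$ and successors from $\psi_\beta^2(\mathrm{first}_{k-1})$; for instance, a direct computation at $k=6$ produces the interior chain $(0,2)(2,6)(6,3)(2,1)$ of length four.

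\textbf{Main obstacle.} The delicate point is verifying that, along a suitable subsequence of $k$'s, these junction-centred interior chains have unbounded length, rather than staying in a bounded range. This requires inductively tracking the recursion $\mathrm{last}_{k+1}$ as the last letter of $\psi_\beta^2(\mathrm{last}_k)$ (and likewise for $\mathrm{first}_k$), together with the immediately adjacent letters, and exploiting the fact that whenever both $\mathrm{last}_k$ and $\mathrm{first}_k$ are interior, another application of $\psi_\beta^2$ concatenates the images and can lengthen the chain on either side. Once arbitrarily long interior chains are produced, each summand contributes a positive value to $L_\beta$ that stays bounded away from zero along this principal branch of indices, so the corresponding gap sizes diverge and $\mathbb{Z}_{-\beta}$ fails to be relatively dense.
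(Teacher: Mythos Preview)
Your step~(i) is correct and essentially matches the paper: the recursion $X_{k+1}=X_k\,1\,2\,X_k\,2$ with $|X_k|$ odd is exactly the mechanism the paper uses, and the vacuity of Lemma~\ref{l:exp} is noted there too.

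Step~(ii), however, has a genuine gap. You correctly reduce the problem to exhibiting arbitrarily long runs of ``interior'' letters $(i,j)$ with $i,j\in\mathbb{N}$ in the $\psi_\beta$-fixed point, and you correctly observe $t_n<0$ for all finite $n\ge0$. But your proposed mechanism---tracking $\mathrm{last}_k,\mathrm{first}_k$ at the junction of $\psi_\beta^{2k}(u_0)$ and $\psi_\beta^{2k}(u_1)$---does not come with an inductive statement, and the ``main obstacle'' you flag is precisely the content of the proof. Iterating $\psi_\beta^2$ at generic levels does not interact with the self-similarity of $a_1a_2\cdots$ under $\sigma_2$, so there is no reason the interior chain around your junction should grow in any controlled way (indeed, $\psi_\beta^2$ can send an interior letter to a word whose boundary letters involve $\infty$). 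The paper's key insight is to iterate $\psi_\beta$ exactly $|\sigma_2^k(3)|$ times: because $a_1\cdots a_{|\sigma_2^k(3)|}=a_{|\sigma_2^k(3)|+3}\cdots a_{2|\sigma_2^k(3)|+2}$ and $|\sigma_2^{k+1}(3)|=2|\sigma_2^k(3)|+3$, one can prove by a clean induction on $k$ that $\psi_\beta^{|\sigma_2^k(3)|}\big((\infty,0)\big)$ \emph{starts} with the explicit word
\[
\big(\lfloor|\sigma_2^k(3)|/2\rfloor,\lceil|\sigma_2^{k-1}(3)|/2\rceil\big)\cdots\big(\lfloor|\sigma_2^1(3)|/2\rfloor,\lceil|\sigma_2^0(3)|/2\rceil\big)\,(0,\infty),
\]
so each increment $k\to k+1$ prepends exactly one new interior letter. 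This is the inductive structure your plan lacks.

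A second gap is the claim that ``each summand contributes a positive value to $L_\beta$ that stays bounded away from zero''. This needs proof: the letters in the chain above have the form $(i,j)$ with $2i=|\sigma_2^k(3)|-1$ and $2j-1=|\sigma_2^{k-1}(3)|$, and one must compute $t_{2j-1}-t_{2i}$ from the digits $a_{|\sigma_2^k(3)|}\ge2$, $a_{|\sigma_2^k(3)|+1}=1$, $a_{|\sigma_2^{k-1}(3)|+1}=1$, $a_{|\sigma_2^{k-1}(3)|+2}=2$ to get a uniform lower bound (the paper obtains $>1/\beta$). Without this, an unbounded number of interior letters does not yet give unbounded gap length.
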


\begin{proof}
Since $|\sigma_2^k(3)|$ is odd for all $k \ge 0$, we have $\sigma_2^k(3)\, 2 <_{\mathrm{alt}} \sigma_2^k(3)\, 1$, thus $a_1 a_2 \cdots$ satisfies the conditions of Lemma~\ref{l:gora}.
Since $a_n > 0$ for all $n \ge 2$, the condition of Lemma~\ref{l:exp} holds trivially.
Therefore, there exists a $\beta > 1$ such that $a_1 a_2 \cdots$ is the $(-\beta)$-expansion of~$\frac{-\beta}{\beta+1}$. 

Next we show that, for any $k \ge 0$, $\psi_\beta^{|\sigma_2^k(3)|}\big((\infty,0)\big)$ starts with
\begin{equation} \label{e:psisigma}
\big(\big\lfloor|\sigma_2^k(3)|/2\big\rfloor, \big\lceil|\sigma_2^{k-1}(3)|/2\big\rceil\big) \cdots \big(\big\lfloor|\sigma_2^1(3)|/2\big\rfloor, \big\lceil|\sigma_2^0(3)|/2\big\rceil\big)\, \big(0,\infty\big)\,.
\end{equation}
We have $\psi_\beta\big((\infty,0)\big) = (0,\infty)$ and
\[
\psi_\beta^4\big((0,\infty)\big) = \cdots \psi_\beta^3\big((0,1)\big) = \psi_\beta^2\big((1,\infty)\big) \cdots = \cdots \psi_\beta\big((\infty,0)\, (0,2)\big) = (2,1)\, (0,\infty) \cdots\,,
\]
where we have used that $a_1 > a_2$, $a_4 = a_1$, and $a_n > 0$ for all $n \ge 1$, which implies $t_n < 0$ for all $n \ge 1$.
This yields the statement for $k = 0$ and $k = 1$.
Supose that $\psi_\beta^{|\sigma_2^k(3)|}\big((\infty,0)\big)$ starts with~\eqref{e:psisigma} for some $k \ge 1$.
Then we have
\begin{align*}
\psi_\beta^{|\sigma_2^{k+1}(3)|}\big((\infty,0)\big) & = \psi_\beta^{|\sigma_2^k(3)|+3}\big(\big(\big\lfloor|\sigma_2^k(3)|/2\big\rfloor, \big\lceil|\sigma_2^{k-1}(3)|/2\big\rceil\big)\big) \cdots \\
& = \cdots \psi_\beta^{|\sigma_2^k(3)|+2}\big(\big(0, \big\lceil|\sigma_2^k(3)|/2\big\rceil\big)\big) \\
& = \psi_\beta^{|\sigma_2^k(3)|+1}\big(\big(\big\lceil|\sigma_2^k(3)|/2\big\rceil, \infty\big)\big) \cdots \\
& = \cdots \psi_\beta^{|\sigma_2^k(3)|}\big(\big(\infty,0\big)\, \big(0,\big\lceil|\sigma_2^k(3)|/2\big\rceil+1\big)\big) \\
& = \big(2\big\lceil|\sigma_2^k(3)|/2\big\rceil, \big\lceil|\sigma_2^k(3)|/2\big\rceil\big) \\ 
& \qquad \big(\big\lfloor|\sigma_2^k(3)|/2\big\rfloor, \big\lceil|\sigma_2^{k-1}(3)|/2\big\rceil\big) \cdots \big(\big\lfloor|\sigma_2^1(3)|/2\big\rfloor, \big\lceil|\sigma_2^0(3)|/2\big\rceil\big)\, \big(0,\infty\big) \cdots\,,
\end{align*}
where we have used the relations $a_{|\sigma_2^k(3)|} = 2 > 1 = a_{|\sigma_2^{k-1}(3)|+1}$, $a_1 > a_{|\sigma_2^k(3)|+1}$, and $a_1 \cdots a_{|\sigma_2^k(3)|} = a_{|\sigma_2^k(3)|+3} \cdots a_{2|\sigma_2^k(3)|+2}$.
Since $2\big\lceil|\sigma_2^k(3)|/2\big\rceil = |\sigma_2^k(3)|+1 = \big\lfloor|\sigma_2^{k+1}(3)|/2\big\rfloor$, we obtain inductively that $\psi_\beta^{|\sigma_2^k(3)|}\big((\infty,0)\big)$ starts with~\eqref{e:psisigma} for all $k \ge 0$.

For any $k \ge 0$, we have $a_{|\sigma_2^k(3)|} \ge 2$, $a_{|\sigma_2^k(3)|+1} = 1$, $a_{|\sigma_2^k(3)|+2} = 2$, and $t_k < 0$.
For any $k \ge 1$, this yields that
\[
L_\beta\big(\big(\big\lfloor|\sigma_2^k(3)|/2\big\rfloor, \big\lceil|\sigma_2^{k-1}(3)|/2\big\rceil\big)\big) = t_{|\sigma_2^{k-1}(3)|} - t_{|\sigma_2^k(3)|-1} > - \frac{1}{\beta} + \frac{2}{\beta^2} + \frac{t_0}{\beta^2} + \frac{2}{\beta} - \frac{1}{\beta^2} > \frac{1}{\beta}\,.
\]
Let $k' = -\big|\psi_\beta^{|\sigma_2^k(3)|}\big|$.
Then we have $u_{k'+j} = \big(\big\lfloor|\sigma_2^{k-j}(3)|/2\big\rfloor, \big\lceil|\sigma_2^{k-j-1}(3)|/2\big\rceil\big)$ for $0 \le j < k$, thus $(y_{k'}, y_{k'+k}) \cap \mathbb{Z}_{-\beta} = \emptyset$ and $y_{k'+k} - y_{k'} > k/\beta$.
Since $k \ge 1$ was chosen arbitrary, the distances between consecutive $(-\beta)$-integers are unbounded.
\end{proof}

Many other examples of sets $\mathbb{Z}_{-\beta}$ which are not relatively dense can be found by setting $a_1 a_2 \cdots = \sigma_3\, \sigma_2^\infty(3)$ with a morphism $\sigma_3$ such that $\sigma_3(2) = 2$, $\sigma_3(1) = 1$, and $\sigma_3(3)$ is a suitable word of odd length.

We conclude the paper by stating the following three open problems, for which partial solutions are given in this section.
Note that all the corresponding problems for positive bases have well-known, simple solutions, as mentioned above.
\begin{enumerate}
\item
Characterise the sequences $a_1 a_2 \cdots$ which are possible $(-\beta)$-expansions of~$\frac{-\beta}{\beta+1}$.
\item
Characterise the numbers $\beta > 1$ such that $\mathbb{Z}_{-\beta}$ is uniformly discrete.
\item
Characterise the numbers $\beta > 1$ such that $\mathbb{Z}_{-\beta}$ is relatively dense.
\end{enumerate}

\bibliographystyle{eptcs}
\bibliography{delone}
\end{document}